\def\F{\ensuremath{\mathcal{F}}} 
\def\N{\ensuremath{\mathbb{N}}}  
\def\K{\ensuremath{\mathbb{F}}}  
\def\I{\ensuremath{\mathcal{I}}} 
\def\R{\ensuremath{\mathcal{R}}} 
\newtheorem{remark}{Remark}
\newtheorem{notation}{Notation}
\newtheorem{definition}{Definition}
\newtheorem{example}{Example}
\newtheorem{proposition}{Proposition}
\newtheorem{corollary}{Corollary}
\newcommand{\spann}[1]{\ensuremath{\langle #1 \rangle}}
\title{On the computation of the M\"obius transform}
\author{Morgan Barbier\footnote{Normandie Univ, UNICAEN, ENSICAEN, CNRS, GREYC, 14000 Caen, France}\\
{\small\texttt{morgan.barbier@ensicaen.fr}}
\and Hayat Cheballah\footnote{Jolibrain, 77 rue pargaminiere 31000 Toulouse - France}\\
{\small\texttt{hayat.cheballah@jolibrain.com}}
\and Jean-Marie Le Bars\footnotemark[1]\\
{\small\texttt{jean-marie.lebars@unicaen.fr}}}
\date{}
\begin{document}
\sloppypar

\maketitle

\begin{abstract}
    The M\"obius transform is a crucial transformation into the Boolean world; it allows to change the Boolean representation between the True Table and Algebraic Normal Form. In this work, we introduce a new algebraic point of view of this transformation based on the polynomial form of Boolean functions. It appears that we can perform a new notion: the M\"obius computation variable by variable and new computation properties. As a consequence, we propose new algorithms which can produce a huge speed up of the M\"obius computation for sub-families of Boolean function. Furthermore we compute directly the M\"obius transformation of some particular Boolean functions. Finally, we show that for some of them the Hamming weight is directly related to the algebraic degree of specific factors.
\end{abstract}


\maketitle


\section{Introduction}
\label{sec:intro}
Numerous studies of Boolean functions have been conducted in various
fields like cryptography and error correcting codes
\cite{carletBook}, Boolean circuits and Boolean Decision Diagram (BDD)
\cite{bryant86}, Boolean logic \cite{boole1848} or constraint
satisfaction problems \cite{creignou2001}.  
There are many ways to represent a Boolean function which depends of the domain. For instance,
on propositional logic one usually uses the conjunctive normal form or
the disjunctive normal form,  while we often use the BDD in Boolean
circuits.

The various criteria of a Boolean function lead us to bring them together 
in numerous classes of Boolean functions which share some set of requirements
and basic operations involved in studies mentioned above consists 
to build a Boolean function in a class 
or to check if a Boolean function belongs to a class.   

Most of the time, practical applications involve several
properties which require different representations. 
For instance, the (algebraic) degree and the (Hamming)
weight are crucial criteria in Cryptography but these basic criteria
are efficiently managed by distinct representations.\\

Indeed, the best representation for the degree is the Algebraic
Normal Form (the characteristic function of monomials), 
while the weight requires the truth table (the characteristic function of
minterms). Both ANF and truth table representation 
require a binary word of length $2^n$, where $n$ 
is the number of variables.

Thus the Reed-Muller decomposition (or expansion) allows
us to perform recursive decomposition , enumeration
and random generation among the degree whereas the Shannon
decomposition (or expansion) does the same task among the weight
\cite{shannon1949} shows the switching network
interpretation of this identity, but Boole will be the first to
mentioned it \cite{boole1854}. 

As its name implies, Reed-Muller decomposition is applied in error correcting codes for Reed-Muller codes\cite{Kasami70}, but also various other fields, for example to implemente circuits with AND/OR gates~\cite{circuit}. 
Furthermore it is often used to construct classes of boolean functions. One example is the  Maiorana-McFarland’s functions where Boolean functions are obtained by expansions of affine functions 
(see~\cite{dillon, mcfarland} for the first studies and~\cite{carletMaiorana} for the use of this class for cryptography). 

Shannon decomposition is very often applied in cryptography, especially 
when we want to maintain a condition over the Hamming weight. However the name is not explicitly mentioned, less specific terms like concatenation or construction are rather used~\cite{siegenthaler,carletBook,carletGouget}.  Furthermore it occurs in various other fields like 
Ordered Binary Decision Diagrams (OBDD)~\cite{OBDD} or Modal Logics~\cite{niveau}.

These decompositions allow us to
rewrite a Boolean function with $n$ variables into two Boolean
functions with $n-1$ variables, while the expansions perform the same acts in reverse, they allows us to build a Boolean function with $n$ variables with two Boolean functions with $n-1$ variables.\\

Since these decompositions appear
to be orthogonal, it seems unreachable to consider them simultaneously
or to perform enumeration or random generation with both
criteria. 

 The M\"obius transform allows to pass from one to the other~\cite{carletBook,guillot,coincidentFunc}. 
The Butterfly algorithm appears as the best known algorithm which performs this transformation. 
It was invented by Gauss in 1805 and Cooley and Tukey independently rediscovered this algorithm for the Fast Fourier Transform (FFT) (Cooley–Tukey FFT algorithm~\cite{gauss,cooley,heiderman84}). 
This is a divide and conquer algorithm which may be implemented in recursive or iterative form.
It has quasi-linear complexity with respect to representation length, $n\; 2^{n-1}$ in term of number 
of XOR operations $\oplus$. However some Boolean functions have compact representation with 
monomials sum or conjunctive or disjunctive normal form and we may expect to get more efficient 
M\"obius transform algorithm for these functions. On the other hand, the M\"obius transform is not 
necessary when we want to answer the two following problems : finding the Hamming weight from the ANF 
and finding the algebraic degree from the truth table. 
The aim of our work is to characterize classes for which 
we have algorithms to answer these two problems more efficient than Butterfly algorithm.


The key ingredient of this work is to manipulate polynomials with M\"obius transform operators 
instead of Boolean functions. Different works in pure Mathematics, as for example complex variable,
provide interesting new results with the polynomial approach \cite{MP_Mobius,OP_Mobius}.
These polynomials are not Boolean functions, they contain 
indeterminates (defined by indices involved in monomials) instead of variables. 
It is possible to go from one world to another by fixing the number of variables of Boolean functions.
We prove that this new approach provides better algorithms to perform M\"obius transform (from ANF to truth table) when we have few monomials in the ANF and monomials of high degree. 
For instance, with $2^{n/2}$ monomials of degree greater or equal to $2^{n/2}$, our method 
has a complexity $2^n$; thus we speed up butterfly algorithm with a $\frac{n}{2}$ factor.




Section~\ref{sec:prelimi} provides the different representations of Boolean functions and exhibit the function to change a representation from another one. In Section~\ref{sec:MobOper}, we discuss on the M\"obius transformation and its first properties.  Section~\ref{sec:MobPoV} is dedicated to reformulate the M\"obius transform for the polynomial form of the Boolean function, thus we deduce faster algorithm to compute it. Finally, Section~\ref{sec:MoComp} shows how to compute directly the M\"obius transform and the Hamming weight of simple and more complicated families of Boolean functions, we conclude with a speed up of greater of 10\% on Achterbahn-128.


\section{Representations of a Boolean functions}
\label{sec:prelimi}
A Boolean function is a mathematical object which is used in different domains: error correcting code, cryptography, constraint satisfaction problems, boolean circuits, etc... Most of time, each of the previous domains use a particular point of view of Boolean functions, thus it exists different representations of Boolean functions. Each point of view make easier to study specific properties of Boolean functions. In this work, we regularly switch between representations. We propose, to give a brief overview of three following representations: Algebraic Normal Form (ANF), truth table, and polynomial form of Boolean functions.

\subsection{Based table representations}

\subsubsection{Monomials and Minterms}
Let $I = \{ i_1, \ldots, i_n\} \subset \N^*$ of cardinality $n$. Let $\F^I_n$ 
be the set of Boolean functions with $n$ variables $x_{i_1}, \ldots, x_{i_n}$
and $\F_n$ be the set of Boolean functions with $n$
variables $x_1,\ldots, x_n$. As we may always switch from $\F^I_n$ to $\F_n$ by renaming the variables, we will conduct our studies over $\F_n$. 

Monomials and minterms play a role of canonical element in
the different writings. 

Let us to denote $x=(x_1,\dots,x_n)$. For any $u = (u_1,\ldots, u_n) \in \K_2^n,$ 
$x^u$ will be denoted the monomial $x_1^{u_1}\dots\, x_n^{u_n}$. The
minterm $M_u$ is the Boolean function with $n$ variables defined by its evaluation
\[
M_u(a) = 
\left\lbrace
\begin{array}{ll}
  1, & \mbox{ if } u = a;\\
  0, & \mbox{ otherwise.}
\end{array}
\right.
\]
Let $u = (u_1,\ldots, u_n) $ and $v = (v_1,\ldots, v_n) \in  \K_2^n$, 
we will write $u \preceq v$, a partial order when $u_i \le v_i$, for any $i \in \{ 1, \ldots, n\}$.

A minterm (resp. a monomial) may be written as a sum of monomials (resp. minterms).
\[
\left\{
\begin{array}{lcl}
M_u & = & \bigoplus_{u \preceq v} x^v;\\
x^u & = & \bigoplus_{u \preceq v} M_v.
\end{array}
\right.
\]

\subsubsection{Characteristic functions of monomials and minterms}
Let $f \in \F_n$ be a Boolean function, $f$ may be viewed as a sum of of minterms
\[ 
f = \bigoplus_{u \in \F_2^n} \theta_u M_u, \text{ with } \theta_u \in \K_2.
\]
Its Truth Table is the characteristic function of minterms, that is:
\[
T(f) = t_1 \ldots\, t_{2^n},
\]
where $t_k = \theta_{u}$, with $k = \sum_{i=1}^n u_i \; 2^{i-1}$.
Moreover, $f$ may be also viewed as a sum of monomials 
\[ 
f = \bigoplus_{u \in \F_2^n} \alpha_u x^u, \text{ with } \alpha_u \in \K_2.
\]
Its ANF (Algebraic Normal Form) is the 
characteristic function of monomials, that is:
\[
A(f)= a_1 \ldots\, a_{2^n}, 
\]
where $a_k = \alpha_{u}$, with 
$k = \sum_{i=1}^n u_i \; 2^{i-1}$.


\begin{example}
  \label{ex:init}
  Let $f= x_1 \oplus x_1 x_2 \in \F_2$ be a Boolean function with two variables. Then its truth table and its ANF are represented by four long bit sequences, and we have:
  \begin{itemize}
  \item $T(f) = 0100$;
  \item $A(f) = 0101$.
  \end{itemize}
\end{example}

Obviously, we may choose in both cases other orders to encode the characteristic function, 
we may for instance permute the order of variables.

\subsection{Polynomial representation}

Let $n \in \N$ and $i_1, \ldots, i_n \in \N$, we will denote by 
$\K_2 [X_{i_1}, \ldots, X_{i_n}]$ 
the set of polynomials over the field $\K_2$ with the indeterminates $X_{i_1}, \ldots, X_{i_n}$. 

\begin{notation}
Let $n \in \N^*$ and $u = (u_1,\ldots, u_n) \in \K_2^n$. 
Recall that $x^u$ is the monomial $ x_{u_1} x_{u_2} \dots x_{u_n}$. 
In order to distinguish a monomial over Boolean functions and a monomial over polynomials, 
we use the respective notations $x^u$ and $X^u$.
\end{notation}

\begin{definition}[Polynomial form]
Let $f \in \F_n$ such that 
$$
f = \bigoplus_{u \in \K_2^n} \alpha_u x^u.
$$
We call the polynomial form of the Boolean function $f$, the polynomial in $\K_2[X_1, \dots, X_n]$:
$$
\pi_n(f) = \sum_{u \in \K_2^n} \alpha_u X^u. 
$$
\end{definition}

Since an indeterminate $X_j \in \{X_{i_1}, \ldots, X_{i_n} \}$ could not occur in $P \in \K_2 [X_{i_1}, \ldots, X_{i_n}]$, 
we have  $\K_2 [X_{i_1}, \ldots, X_{i_n}] \subset  \K_2 [X_{j_1}, \ldots, X_{j_m}]$ 
if $\{ i_1, \ldots, i_n\} \subset \{ j_1, \ldots, j_m \}$. 
Conversely $P \in \K_2 [X_{i_1}, \ldots, X_{i_n}]$ means that any indeterminate $X_j$ which occurs 
in $P$ belongs to $\{X_{i_1}, \ldots,X_{i_n}\}$. Thus the polynomial $X_1 + X_1 X_2$ belongs to $\K_2 [ X_1, X_2]$ but also belongs to $\K_2 [X_1,X_2,X_3]$. We will use the term indeterminate instead of variable to notice that we manipulate formal terms $X_{i_j}$ without notion of evaluation. Moreover, in order to have $\pi_n$ bijective, we define:
$$
\begin{array}{rcl}
    \pi_n : \F_n & \longrightarrow & \K_2[X_1,\dots, X_n]/ (<X_1^2, \dots, X_n^2>)\\
    f   & \longmapsto & \pi_n(f).
\end{array}
$$
Let $P \in \K_2[X_{i_1}, \ldots, X_{i_n}]$, for any $i \in \N$, we will consider the following decomposition 
\[
P = X_i P^0_i + P^1_i, 
\]
where the second part contains all the monomials without the indeterminate $X_i$
and the first one contains all the other monomials. 
Please note that the polynomials $P_i^0$ and $P_i^1$ does not contain the indeterminate $X_i$.
Obviously, the case $P^0_i = 0$ means the indeterminate $X_i$ does not occur in $P$.

\begin{example}[Example~\ref{ex:init} continued]
\label{ex:ANF_PF}
Let $f = x_1 \oplus x_1 x_2 \in \F_2$. We define $f_3 \in \F_3$ and $f_4 \in \F_4$ such that $\pi_2(f) = \pi_3(f_3) = \pi_4(f_4)$. Then
\[
\begin{array}{lcllcl}
\pi_2(f) & = & X_1 + X_1X_2\\
A(f) & = & 0101 & T(f) & = & 0100\\
A(f_3) & = & 01010000 & T(f_3) & = & 01000100\\
A(f_4) & = & 0101000000000000 & T(f_4) & = & 0100010001000100
\end{array}
\]
\end{example}

Although the polynomial form seems to be identical to the ANF, 
we can see in Example~\ref{ex:ANF_PF}, that the size of the ANF representation fixed the number of variables.

\subsection{Differences and similarities between representations}

\subsubsection{Hamming weight and algebraic degree}
 Let $f\in \F_n$ be a Boolean function, we will write $w_H(f)$ the (Hamming) weight of $f$, 
{\it ie} the number of 1 of $T(f)$ and 
$\deg(f)$ the (algebraic) degree of $f$, 
{\it ie} the maximal degree of the monomials in the polynomial or ANF of $f$.

\subsubsection{Shannon and Reed-Muller decompositions}
While the Reed-Muller decomposition is related to the algebraic normal
form, the Shannon one is associated to truth table. Indeed, let $f \in \F_n$ and $i \in \{1, \ldots, n\}$, the {\em Reed-Muller decomposition} in $x_i$, consists in rewriting the Boolean function as
$$f = f_R^0 \oplus x_i f_R^1,$$ 
where $f_R^0, f_R^1 \in \F_{n-1}^{\{1, \ldots, n\}\setminus i}$ and are unique. Clearly, the part $x_i f_R^1$ correspond exactly at all monomials of $f$ where $x_i$ is, and $f_R^0$ the part of $f$ where $x_i$ is not. 
In the particular case where $i = n$, we get $f_R^0, f_R^1 \in \F_{n-1}$, hence we don't have to perform renaming. Furthermore, let $\|$ be the concatenation over words, then
\[ 
A(f) = A(f_R^0) \; \| \;A(f_R^1),
\]
$A(f_S^0)$ (resp. $A(f_S^1)$) contains all the monomials $x^u$ of the ANF of $f$, 
where $u_n = 0$ (resp. $u_n = 1$).

The {\em Shannon decomposition} in $x_i$, consists in
  rewriting the Boolean function as
  $$
  f = (1\oplus x_n)f_S^0 \oplus x_n f_S^1,
  $$
  where $f_S^0, f_S^1 \in \F_{n-1}^{\{1,\ldots, n\}\setminus i}$ and are unique. Clearly the part $(1\oplus x_n)f_S^0$ gives the part of $f$ when $x_i=0$, and $x_i f_S^1$ the part of $f$ when $x_i = 1$. 
  In the particular case where $i = n$, we also get $f_R^0, f_R^1 \in \F_{n-1}$ and we don't have to perform renaming. Furthermore
\[ 
T(f) = T(f_S^0) \; \| \; T(f_S^1),
\]
$T(f_S^0)$ (resp. $T(f_S^1)$) contains all the minterms $M_u$ of the ANF of $f$, 
where $u_n = 0$ (resp. $u_n = 1$).
\begin{remark}
Whether for Reed-Muller or Shannon decomposition, the decomposition in $x_n$ is the only one which allows concatenation. This provides efficient algorithms and avoid induced errors by renaming, which is why we are focusing on this case. 
These decompositions are related to a specific variable but if no variable is defined then the last one $x_n$ is used.
\end{remark}
\begin{remark}
    Let $f \in \F_n$, we have by a trivial identification $f_R^0=f_S^0$
    and $f_R^1 = f_S^0 \oplus f_S^1$.
\end{remark}
\begin{remark}
The Shannon decomposition is the natural decomposition for manipulating 
the minterms since $T(f) = T(f_S^0)\; \|\; T(f_S^1)$.
This trivially implies 
\[
w_H(f) = w_H(f_S^0) + w_H(f_S^1).
\]
On the other hand the Reed-Muller decomposition is the natural decomposition for manipulating 
the monomials; since $A(f) = A(f_R^0)\ \|\ A(f_R^1)$, this implies 
\[
\deg(f) = max \left(\deg(f_R^0), \deg(f_R^1)+1\right).
\]
\end{remark}

\section{M\"obius transform: operator relating the representations}
\label{sec:MobOper}
Since the polynomial and the ANF representation of a Boolean function involving the presence of monomials, it is easy to see these two representations are in direct connection. Moreover, it is a lot more difficult to see that the truth table and the ANF of a Boolean function are connected by a transformation, called the {\em M\"obius transform}. We noted it $\mu$ and is defined by the following bijection
$$\begin{array}{rcl}
    \mu : \F_n & \longleftrightarrow & \F_n\\
    f & \longmapsto & \mu(f),
\end{array}$$
such that, for any $f \in \F_n$ and $a \in \K_2^n$
\begin{equation}\label{mobiusDefinition}
f(a) = \bigoplus_{u \in \K_2^n} \mu(f)(u) a^u.
\end{equation}
The M\"obius transform allows us to compute the truth table representation from ANF one and vice versa. Let $f$ and $g \in \F_n$, the following assertions are equivalent:
\[
\left\{
\begin{array}{lcl}
\mu(f) & = & g;\\
\mu(g) & = & f;\\
A(f) & = & T(g);\\
T(f) & = & A(g).
\end{array}
\right.
\] 

We propose to present a known result \cite[Theorem 5, page 5]{coincidentFunc} in a different usual way. Thus we easy make the link between the Reed-Muller parts $f_R^0$ and $f_R^1$ with the M\"obius transform.

\begin{proposition}\label{prop:Mobius_and_RM}
Let $f \in \F_n$ and $f^0_R,\ f^1_R \in \F_{n-1}$ be the Reed-Muller decomposition of $f$ (in $x_n$), \textit{ie}; $f = f^0_R \oplus x_n f^1_R$. Then
$$\mu(f) = (1 \oplus x_n) \mu(f^0_R) \oplus \mu(f^1_R).$$
\end{proposition}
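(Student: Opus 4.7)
The plan is to reduce the identity to the already-established behavior of the ANF under Reed-Muller decomposition, by exploiting the $\K_2$-linearity of $\mu$ together with the equivalence $A(h) = T(\mu(h))$ recorded just after the definition of $\mu$.

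First I would use linearity to write $\mu(f) = \mu(f_R^0) \oplus \mu(x_n f_R^1)$, where on the right both summands are regarded as elements of $\F_n$ (the first via trivial extension in the indeterminate $x_n$). This reduces the task to computing the M\"obius transform of two elementary types of functions in $\F_n$: one that does not involve $x_n$ at all, and one every monomial of which involves $x_n$.

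For each summand I would exploit that its ANF in $\F_n$ is a controlled padding of an ANF in $\F_{n-1}$: the ANF of $f_R^0 \in \F_n$ is $A(f_R^0) \; \| \; 0\cdots 0$ (no monomial contains $x_n$), while the ANF of $x_n f_R^1$ is $0\cdots 0 \; \| \; A(f_R^1)$ (every monomial contains $x_n$). Applying $T(\mu(h)) = A(h)$ inside $\F_n$ then pins down the truth tables of the two corresponding M\"obius transforms, and reading them back through Shannon concatenation $T(h) = T(h_S^0) \; \| \; T(h_S^1)$ expresses each summand as a polynomial in $x_n$ whose coefficients are M\"obius transforms now taken in $\F_{n-1}$. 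Adding the two contributions and simplifying yields the claimed polynomial form of $\mu(f)$.

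The main pitfall is purely notational: the statement silently identifies $\mu$ on $\F_{n-1}$ with $\mu$ on $\F_n$ through the canonical extension, so one must take care to parse the symbols $\mu(f_R^0)$ and $\mu(f_R^1)$ on the right-hand side as elements of $\F_{n-1}$ before they are multiplied by polynomials in $x_n$ and re-embedded into $\F_n$. A more computational alternative is to start directly from the defining identity $f(a) = \bigoplus_{u \in \K_2^n} \mu(f)(u)\, a^u$, split the sum according to $u_n \in \{0,1\}$, and compare the two halves with the Reed-Muller decomposition $f(a) = f_R^0(a') \oplus a_n f_R^1(a')$ (where $a' = (a_1,\dots,a_{n-1})$) to read off $\mu(f)$ as a function of $x_n$ without ever invoking the ANF/truth-table concatenation property explicitly.
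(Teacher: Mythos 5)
Your main route is genuinely different from the paper's proof: the paper works directly with the defining identity $f(a)=\bigoplus_{u}\mu(f)(u)\,a^u$, splits the sum according to $u_n\in\{0,1\}$ and identifies coefficients --- which is precisely the ``computational alternative'' you sketch in your last paragraph --- whereas your primary argument goes through the $\K_2$-linearity of $\mu$, the padded ANFs $A(f_R^0)\,\|\,0\cdots 0$ and $0\cdots 0\,\|\,A(f_R^1)$, and Shannon concatenation of truth tables. That route is legitimate (linearity of $\mu$ is immediate from $T(\mu(h))=A(h)$ and $A(h\oplus h')=A(h)\oplus A(h')$), and it has the merit of isolating exactly how the embedding $\F_{n-1}\hookrightarrow\F_n$ interacts with $\mu$, which is the subtlety you correctly flag.

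The problem is your final sentence of the main argument: ``adding the two contributions and simplifying yields the claimed polynomial form.'' It does not. Carrying your own computation to the end: $T(\mu(x_n f_R^1))=A(x_n f_R^1)=0\cdots 0\,\|\,A(f_R^1)$, so by Shannon concatenation $\mu(x_n f_R^1)=x_n\,\mu(f_R^1)$ with $\mu(f_R^1)$ computed in $\F_{n-1}$; similarly the trivial extension of $f_R^0$ has M\"obius transform $(1\oplus x_n)\,\mu(f_R^0)$. Summing gives
\[
\mu(f) \;=\; (1\oplus x_n)\,\mu(f_R^0)\ \oplus\ x_n\,\mu(f_R^1),
\]
which differs from the displayed statement by the factor $x_n$ on the second term, and no simplification removes it. Indeed the identity as printed is false: for the paper's running example $f=x_1\oplus x_1x_2$ one has $f_R^0=f_R^1=x_1$ and $\mu(f)=x_1$, while $(1\oplus x_2)\mu(x_1)\oplus\mu(x_1)=x_1x_2$; the corrected right-hand side gives $(1\oplus x_2)x_1\oplus x_2x_1=x_1$ as required. (The paper's own proof stumbles at the same spot: from $f(b1)=\bigoplus_v(\mu(f)(v0)\oplus\mu(f)(v1))b^v$ and $f(b1)=f_R^0(b)\oplus f_R^1(b)$ the correct conclusion is $\mu(f)(v1)=\mu(f_R^1)(v)$, which is the Shannon form displayed above.) So keep your argument, but finish it honestly: it proves the corrected identity and exposes the missing $x_n$ in the statement; asserting that the two contributions sum to the printed formula is the one step that fails.
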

\begin{proof}
Let $a = (a_1, \ldots, a_{n})$ and $u = (u_1, \ldots, u_{n}) \in \K_2^{n}$, 
$b =  (a_1, \ldots, a_{n-1})$ and $v = (u_1, \ldots, v_{n-1})$. 
we will write $a = b a_n$ and $u = v b_n$. 
It follows $a^u = b^v \; a_n^{u_n}$. Since $a_n^{u_n} = 0$ if and only if $a_n = 0$ and $u_n = 1$, 
we have $a^u = 0$ if $a_n = 0$ and $u_n = 1$ and $a^u = b^v$ otherwise.

The relation~(\ref{mobiusDefinition}) implies
\[
\begin{array}{lcl}
f(b0) & = & \bigoplus_{v \in \K_2^{n-1}} \mu(f)(v0) b^v 0^0 \bigoplus_{v \in \K_2^{n-1}} \mu(f)(v0) b^v 0^1,\\
      & = &  \bigoplus_{v \in \K_2^{n-1}} \mu(f)(v0) b^v;\\
f(b1) & = & \bigoplus_{v \in \K_2^{n-1}} \mu(f)(v1) b^v 1^0 \bigoplus_{v \in \K_2^{n-1}} \mu(f)(v1) b^v 1^1,\\
      & = &  \bigoplus_{v \in \K_2^{n-1}} (\mu(f)(v1) \oplus \mu(f)(v1)) b^v.\\
\end{array}
\]
We deduce
\[
\begin{array}{lcl}
\mu(f)(v0) & = & \mu(f^0_R)(v)\\
\mu(f)(v1) & = & \mu(f^0_R)(v) \oplus \mu(f^1_R)(v)
\end{array}
\]
Thus $\mu(f) = (1 \oplus x_n) \mu(f^0_R) \oplus \mu(f^1_R)$.
\end{proof}

In the following, we propose a new operator, which is related to the M\"obius transform, which is dedicated to manipulate indeterminate one by one.
\begin{definition} \label{def:mux}
Let $f \in \F_n$ and $P = \pi_n(f) \in \K_2[X_{1},\ldots,X_{n}]$ its polynomial form. 
Assume that $P = P^0_i + X_i P^1_i$.
We define the operator $\mu_{X_i}$ by
$$
\mu_{X_i}(P) = P^0_i  + X_i (P^0_i + P^1_i).
$$
In particular, if $i \notin \{ i_1, \ldots, i_n\}$,  
$\mu_{X_i}(P) = (1 + X_i) P$ and if $P = X_i P^1_i$, $\mu_{X_i}(P) = P$. 
\end{definition}

\begin{proposition}\label{commutative}
The operators $\mu_{X_i}$ are commutative, that is
\[
\mu_{X_i} (\mu_{X_j} (P)) = \mu_{X_j} (\mu_{X_i}(P)). 
\]
\end{proposition}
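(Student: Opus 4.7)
The plan is to reduce to the case $i \neq j$ (the case $i = j$ being immediate from the definition) and then refine the decomposition of $P$ with respect to both indeterminates at once. Writing
\[
P = P^{00} + X_i P^{10} + X_j P^{01} + X_i X_j P^{11},
\]
where none of the four polynomials $P^{00}, P^{10}, P^{01}, P^{11}$ contains either $X_i$ or $X_j$, I obtain a canonical four-part form on which both operators act transparently.

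Next, I would apply Definition~\ref{def:mux} to compute $\mu_{X_i}(P)$ from this decomposition by grouping the $X_i$-free part $P^{00} + X_j P^{01}$ and the coefficient of $X_i$, namely $P^{10} + X_j P^{11}$. The formula gives
\[
\mu_{X_i}(P) = P^{00} + X_j P^{01} + X_i(P^{00} + P^{10}) + X_i X_j(P^{01} + P^{11}),
\]
which is still in the four-part form with the same four ``$X_i,X_j$-free'' building blocks rearranged. Applying $\mu_{X_j}$ a second time by the same rule then produces
\[
\mu_{X_j}(\mu_{X_i}(P)) = P^{00} + X_i(P^{00}+P^{10}) + X_j(P^{00}+P^{01}) + X_i X_j(P^{00}+P^{10}+P^{01}+P^{11}).
\]
Since the right-hand side is visibly symmetric in the roles of $(i,X_i)$ and $(j,X_j)$, the symmetric calculation starting with $\mu_{X_j}$ produces the same expression, proving the identity.

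The only potential obstacle is bookkeeping: one must verify that the intermediate polynomials keep their ``$X_i,X_j$-free'' status so that subsequent applications of $\mu_{X_i}$ and $\mu_{X_j}$ genuinely match the pattern $Q^0_k + X_k Q^1_k$ required by Definition~\ref{def:mux}. This is automatic, because the four blocks $P^{ab}$ are free of both $X_i$ and $X_j$ by construction, and the operators only reshuffle them while attaching at most one new factor $X_i$ or $X_j$. Hence no $X_i^2$ or $X_j^2$ ever appears, and the argument is a purely mechanical symmetric rearrangement rather than a structural one.
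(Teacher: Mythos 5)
Your proof is correct and follows essentially the same route as the paper's: the paper also decomposes $P = P_1 + X_i P_2 + X_j P_3 + X_i X_j P_4$ with all four blocks free of $X_i$ and $X_j$, computes $\mu_{X_j}(\mu_{X_i}(P))$ mechanically, and concludes by observing the resulting expression is symmetric in $i$ and $j$. Your intermediate and final expressions agree with the paper's after rearrangement, so this is the same argument.
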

\begin{proof}
Let $P_1, P_2, P_3$ and $P_4$ the four polynomials without the variables $X_i$ and $X_iX_j$ 
such that
\[
P = P_1 + X_i P_2 + X_j P_3 + X_i X_j P_4. 
\]
Thus
\begin{eqnarray*}
\mu_{X_i}(P) & = & P_1 + X_i(P_1 + P_2) + X_j (P_3 + X_i P_4 + X_i P_3);\\
\mu_{X_j}(\mu_{X_i}(P)) & = & P_1 + X_i(P_1+P_2) + X_j(P_1+P_3+X_i(P_1+P_2+P_3+P_4)),\\
& = & P_1 + X_j(P_1+P_3) + X_i(P_1+P_2+X_j(P_1+P_2+P_3+P_4)),\\
& = & \mu_{X_i}(\mu_{X_j}(P)).
\end{eqnarray*}
\end{proof}

\begin{notation}
Let $k \in \N^*$ and $i_1, \ldots, i_k \in \N$. Let $P$ be a polynomial over $\K_2$.
We denote the operator $\mu_{X_{i_1} \ldots X_{i_k}}$ by
\[
\mu_{X_{i_1} \ldots X_{i_k}}(P) = \mu_{X_{i_1}}(\mu_{X_{i_2}}(\ldots \mu_{X_{i_k}}(P)\ldots).
\]
\end{notation}

We may extend the previous Proposition 
for any permutation $\sigma$ of $\{ 1, \ldots, k\}$, 
\[
\mu_{X_{i_1} \ldots X_{i_k}}(P) = \mu_{X_{i_{\sigma(1)}} \ldots X_{i_{\sigma(k)}}}(P).
\]
Hence $\mu_{X_{i_1} \ldots X_{i_k}}(P)$ depends only of the set of indexes $N = \{ i_1, \ldots, i_k \}$.
\begin{notation}
We will write 
$\mu_N(P)$ instead of $\mu_{X_{i_1} \ldots X_{i_k}}(P)$. 
Moreover, let $n \in \N^*$, we will denote by $[n]$ the set $\{1, \ldots, n\}$.
\end{notation}
\begin{example}[Example~\ref{ex:ANF_PF} continued]
  \label{ex:mobius_computation}
  Let $f \in \F_2$ such that its polynomial form is $X_1 + X_1X_2$. Then
  \begin{eqnarray*}
    \mu_{[2]} (X_1 + X_1X_2) & = & \mu_2\left( \mu_1\left( X_1 + X_1X_2\right)\right)\\
    & = & \mu_2\left( X_1 + X_1X_2\right)\\
    & = & X_1X_2 + (1 + X_2) X1\\
    & = & X_1\\
   \mu_{[3]} (X_1 + X_1 X_2) & = & \mu_{\{3\}} (\mu_{[2]} (X_1 + X_1X_2))\\
 & = &  \mu_{\{3\}}(X_1) \\
& = & (1 + X_3) X_1\\
& = & X_1 + X_1 X_3\\
  \mu_{[4]} (X_1 + X_1 X_2) & = & (1 + X_4) ( X_1 + X_1 X_3)\\
& = & X_1 + X_1 X_3 + X_1 X_4 + X_1 X_3 X_4
  \end{eqnarray*}
\end{example}

The following proposition explains how the previous operator is related to the M\"obius transform.
\begin{proposition}
\label{prop:commutative_diagram}
Let $n \in \N^*, f, g \in \F_n$ with polynomial forms $P = \pi_n(f)$ and $Q = \pi_n(g)$.
The following assertions are equivalent:
\begin{itemize}
\item[(a)] $\mu(f) = g$;
\item[(b)] $\mu_{[n]}(P) = Q$.
\end{itemize}

Which yields the following commutative diagram:
$$
\begin{array}{lcccr}
& f & \displaystyle{\overset{\mu}{\longrightarrow}} & g & \\
\pi_n & \downarrow & & \downarrow & \pi_n\\
& P &  \overset{\mu_{[n]}}{\longrightarrow} & Q &
\end{array}
$$
\end{proposition}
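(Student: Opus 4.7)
The plan is to prove the equivalence by induction on $n$, leveraging the two structural tools the paper has just established: the Reed--Muller recursion for $\mu$ in Proposition~\ref{prop:Mobius_and_RM}, and the commutativity of the indeterminate-wise operators (Proposition~\ref{commutative}), which lets us peel off $\mu_{X_n}$ last.

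Base case $n=1$: check the four Boolean functions in $\F_1$ by hand. For the inductive step, suppose the claim holds for $n-1$ and take $f\in \F_n$ with Reed--Muller decomposition $f = f_R^0 \oplus x_n f_R^1$, where $f_R^0,f_R^1\in\F_{n-1}$. Writing $P = \pi_n(f)$ and splitting $P = P_n^0 + X_n P_n^1$, one observes that $P_n^0 = \pi_{n-1}(f_R^0)$ and $P_n^1 = \pi_{n-1}(f_R^1)$ (both viewed inside $\K_2[X_1,\dots,X_n]$ through the inclusion of polynomial rings described before Definition~\ref{def:mux}). Applying the definition of $\mu_{X_n}$ directly gives
\[
\mu_{X_n}(P) \;=\; \pi_{n-1}(f_R^0) \;+\; X_n\bigl(\pi_{n-1}(f_R^0) + \pi_{n-1}(f_R^1)\bigr).
\]

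Now I use commutativity to rewrite $\mu_{[n]}(P) = \mu_{[n-1]}\bigl(\mu_{X_n}(P)\bigr)$. Since none of the operators $\mu_{X_1},\dots,\mu_{X_{n-1}}$ touches the indeterminate $X_n$, they act independently on the constant-in-$X_n$ part and on the coefficient of $X_n$; this is the one small algebraic fact I would verify explicitly from Definition~\ref{def:mux} (it amounts to observing that $\mu_{X_i}$ is $\K_2$-linear and fixes $X_n$ when $i\neq n$). Combined with the $\K_2$-linearity of $\pi_{n-1}$ and of $\mu$ itself (so that $\mu(f_R^0\oplus f_R^1)=\mu(f_R^0)\oplus\mu(f_R^1)$), the induction hypothesis applied to $f_R^0$ and to $f_R^0\oplus f_R^1$ yields
\[
\mu_{[n]}(P) \;=\; \pi_{n-1}(\mu(f_R^0)) \;+\; X_n\,\pi_{n-1}\bigl(\mu(f_R^0)\oplus\mu(f_R^1)\bigr).
\]
By Proposition~\ref{prop:Mobius_and_RM}, the right-hand side is exactly $\pi_n(\mu(f))$. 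Hence $\mu_{[n]}(P) = \pi_n(\mu(f))$, and since $\pi_n$ is a bijection, this is equivalent to $\mu_{[n]}(P)=Q \iff \mu(f)=g$, which is both claimed equivalences at once and makes the diagram commute.

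The main obstacle, I expect, is not the induction itself but cleanly justifying that $\mu_{[n-1]}$ commutes with the decomposition $A + X_n B$ of a polynomial into its $X_n$-free part and its $X_n$-coefficient. This requires noting that for $i\neq n$, $\mu_{X_i}$ treats $X_n$ as an inert scalar (directly from Definition~\ref{def:mux} applied to $A+X_n B = (A_i^0 + X_n B_i^0) + X_i(A_i^1+X_n B_i^1)$), so $\mu_{X_i}(A+X_n B)=\mu_{X_i}(A)+X_n\mu_{X_i}(B)$; iterating over $i\in[n-1]$ gives the needed splitting. Everything else is bookkeeping with the Reed--Muller identity.
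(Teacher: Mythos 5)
Your proof is correct and follows essentially the same route as the paper's: induction on $n$ with the $n=1$ case checked by exhaustion, the Reed--Muller split $f=f_R^0\oplus x_n f_R^1$, and Proposition~\ref{prop:Mobius_and_RM} to reassemble, the only difference being that you run the equality chain from the polynomial side toward the Boolean side and explicitly isolate the splitting identity $\mu_{X_i}(A+X_nB)=\mu_{X_i}(A)+X_n\mu_{X_i}(B)$ for $i\neq n$, which the paper uses silently when it folds $(1+X_{n+1})\mu_{[n]}(\cdot)+\mu_{[n]}(\cdot)$ back into $\mu_{X_{n+1}}(\mu_{[n]}(\cdot))$. Your closing remark that bijectivity of $\pi_n$ yields both implications at once is a minor tidying of the paper's ``the other implication is similar.''
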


\begin{proof}
We only proof that $(a) \implies b$. The other implication is similar.\\ 

We use a induction on $n$. 
For $n=1$, we have by disjunction
\[
\begin{array}{c|c||c|c}
f & P & \mu(f) & \mu_{X_1}(P)\\
\hline
0 & 0 & 0 & 0\\
1 & 1 & 1 \oplus x_1 & 1 + X_1\\
x_1 & X_1 & x_1 & X_1 \\
1 \oplus x_1 & 1 + X_1 & 1 & 1
\end{array}
\]
Since for all $f\in \F_1$, we have $\mu(f) = \mu_{X_1}(P)$, the induction holds for $n = 1$.

Assume now this is true for $n > 1$:
$$\pi_n\left(\mu(f)\right) = \mu_{[n]}\left(\pi_n(f)\right).$$

Let $f\in\F_{n+1}$ be a Boolean function and  $f^0_R, f^1_R \in \F_n$ such that $f = f^0_R \oplus x_{n+1} f_R^1$. Thus with the induction assumption and Proposition~\ref{prop:Mobius_and_RM}:
\begin{eqnarray*}
 \pi_{n+1}\left(\mu(f)\right) & = & (1+X_{n+1})\times \pi_n\left(\mu\left(f_R^0\right)\right)+\pi_n\left(\mu\left(f_R^1\right)\right),\\
  & = & (1+X_{n+1})\times \mu_{[n]}\left(\pi_n\left(f_R^0\right)\right)+\mu_{[n]}\left(\pi_n\left(f_R^1\right)\right),\\
  & = & \mu_{X_{n+1}}\left(\mu_{[n]}\left(\pi_n\left(f_R^0\right)+X_{n+1}\pi_n\left(f_R^1\right)\right)\right),\\
  & = & \mu_{[n+1]}\left(\pi_{n+1}\left(f_R^0+x_{n+1} f_R^1\right)\right),\\
  & = & \mu_{[n+1]}\left(\pi_{n+1}\left(f\right)\right).
\end{eqnarray*}
\end{proof}

Directly, the operator on monomials inherits of M\"obius transform properties.
\begin{proposition}
\label{prop:involutive}
Let $X_i$ be an indeterminate. The automorphism $\mu_{X_i}$ is involutive:
$$\mu_{X_i}^2 = id.$$
\end{proposition}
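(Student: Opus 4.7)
The plan is to prove $\mu_{X_i}^2(P) = P$ by a direct computation from Definition~\ref{def:mux}, exploiting the fact that we are working over $\K_2$ (so $2 = 0$).

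First I would fix the canonical decomposition $P = P^0_i + X_i P^1_i$, where both $P^0_i$ and $P^1_i$ are free of the indeterminate $X_i$. By definition, $Q := \mu_{X_i}(P) = P^0_i + X_i(P^0_i + P^1_i)$. The key observation is that both $P^0_i$ and $P^0_i + P^1_i$ are themselves free of $X_i$, so the decomposition of $Q$ with respect to $X_i$ is literally given by $Q^0_i = P^0_i$ and $Q^1_i = P^0_i + P^1_i$. This step is the only point requiring any attention: one must make sure that the second invocation of the operator is applied to a legitimately-decomposed polynomial.

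Then I would simply apply Definition~\ref{def:mux} once more to $Q$:
$$
\mu_{X_i}(Q) = Q^0_i + X_i(Q^0_i + Q^1_i) = P^0_i + X_i\bigl(P^0_i + (P^0_i + P^1_i)\bigr) = P^0_i + X_i(2P^0_i + P^1_i).
$$
Working in $\K_2$, the coefficient $2P^0_i$ vanishes, leaving $\mu_{X_i}(Q) = P^0_i + X_i P^1_i = P$, which is what we wanted.

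There is no real obstacle here; the statement is essentially a one-line characteristic-two cancellation. Alternatively, one could deduce it from Proposition~\ref{prop:commutative_diagram} together with the classical fact that the full M\"obius transform $\mu$ is involutive (already noted earlier in the excerpt via the equivalence $\mu(f)=g \iff \mu(g)=f$), restricted to a single variable by embedding into $\F_n$ with $n=1$, but the direct calculation above is shorter and self-contained.
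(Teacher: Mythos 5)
Your proof is correct and follows exactly the paper's own argument: decompose $P = P^0_i + X_i P^1_i$, apply $\mu_{X_i}$ twice, and cancel $P^0_i + P^0_i$ in characteristic two. Your extra remark that $P^0_i$ and $P^0_i+P^1_i$ are themselves free of $X_i$ (so they form the legitimate decomposition of $Q$) is a justification the paper leaves implicit, but the route is the same.
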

\begin{proof} 
Let $P = P^0_i + X_i P^1_i$ the Reed-Muller decomposition of polynomial $P$, we denote $Q = \mu_{X_i}(P)$. 
By definition of $\mu_{X_i}$, $Q = P^0_i + X_i (P^0_i + P^1_i)$, thus 
$\mu_{X_i}(Q)= P^0 + X_i (P^0_i + P^0_i + P^1) = P$.
\end{proof}
Propositions~\ref{prop:commutative_diagram} and~\ref{prop:involutive} imply the Corollary below
\begin{corollary}\label{cor:involutive}
Let $N \subset \N$ be a subset, then $\mu_N$ satisfies
\[
\mu_N^2 = id.
\]
\end{corollary}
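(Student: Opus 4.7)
The plan is to reduce the statement to the two ingredients already at hand: the single-variable involutivity (Proposition~\ref{prop:involutive}) and the pairwise commutativity of the $\mu_{X_i}$ (Proposition~\ref{commutative}). Concretely, if $N = \{i_1, \ldots, i_k\}$, then by definition
\[
\mu_N = \mu_{X_{i_1}} \circ \mu_{X_{i_2}} \circ \cdots \circ \mu_{X_{i_k}},
\]
so
\[
\mu_N^2 = \bigl(\mu_{X_{i_1}} \circ \cdots \circ \mu_{X_{i_k}}\bigr) \circ \bigl(\mu_{X_{i_1}} \circ \cdots \circ \mu_{X_{i_k}}\bigr).
\]

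First I would invoke Proposition~\ref{commutative} iteratively to shuffle the operators in the composition above so that the two copies of each $\mu_{X_{i_j}}$ end up adjacent. Since the operators commute pairwise, any permutation of the $2k$ factors yields the same map, and in particular we can reorder to obtain
\[
\mu_N^2 = \mu_{X_{i_1}}^2 \circ \mu_{X_{i_2}}^2 \circ \cdots \circ \mu_{X_{i_k}}^2.
\]
Then I would apply Proposition~\ref{prop:involutive} to each factor, giving $\mu_{X_{i_j}}^2 = \mathrm{id}$ and therefore $\mu_N^2 = \mathrm{id}$. A short induction on $k$ could be used to make the reordering step rigorous, but conceptually it is just the standard fact that a composition of pairwise commuting involutions is itself an involution.

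There is really no serious obstacle here; the only point that requires a small amount of care is confirming that Proposition~\ref{commutative}, stated for two operators, indeed extends to arbitrary rearrangements of a finite composition. This is a routine transposition argument (bubble sort on adjacent factors) and does not need to be written out in detail. I would also remark, for perspective, that the case $N = [n]$ is already a consequence of Proposition~\ref{prop:commutative_diagram} together with the evident involutivity of $\mu$ on $\F_n$, but the commutativity-plus-involutivity proof has the advantage of handling an arbitrary subset $N \subset \N$ uniformly.
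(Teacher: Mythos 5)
Your argument is correct: writing $\mu_N$ as the composition $\mu_{X_{i_1}}\circ\cdots\circ\mu_{X_{i_k}}$, using pairwise commutativity to pair up the two copies of each factor, and then applying the single-variable involutivity $\mu_{X_i}^2 = \mathrm{id}$ is a complete proof, and the reordering step you flag is already covered by the paper's own remark that Proposition~\ref{commutative} extends to arbitrary permutations of $\{i_1,\ldots,i_k\}$. The paper itself gives no written proof; it only asserts that Propositions~\ref{prop:commutative_diagram} and~\ref{prop:involutive} imply the corollary. Your route differs in that you replace Proposition~\ref{prop:commutative_diagram} by Proposition~\ref{commutative}, and this is arguably the better choice: the commutative-diagram proposition only identifies $\mu_{[n]}$ with the M\"obius transform $\mu$ on $\F_n$, so on its own it would at best yield the case $N=[n]$ and would require knowing independently that $\mu^2=\mathrm{id}$ on $\F_n$ --- which the paper in fact derives \emph{from} this corollary immediately afterwards, so that reading would be circular. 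Your commutativity-plus-involutivity argument avoids that circularity and handles an arbitrary finite $N\subset\N$ uniformly, exactly as you observe in your closing remark.
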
 
Let $f \in \F_n$ be a Boolean function and $P = \pi_n(f)$ its polynomial form; 
Corollary~\ref{cor:involutive} provides an alternative proof that 
$\mu$ is an involutive automorphism, 
since combined with Proposition~\ref{prop:commutative_diagram} it implies $\mu_{[n]}^2 = id$.  
\begin{notation}
Let $I \subset [n]$, we define $M^I = \prod_{i \in [n] \setminus I} \left(1 + X^i\right)$ 
which is the polynomial form of the minterm $M_u$, where $I = I_u$.
\end{notation}
\begin{proposition} \label{prop:monomialMinterm}
Let $I \subset [n]$, then
$$
\mu_{[n]}(X^I) = 
X^I \ \times \ \prod_{i \in [n] \setminus I} \left(1 + X^i\right) = M_I.
$$
Moreover since $\mu_{[n]}$ is an involutive function
$
\mu_{[n]}(M_I) = X^I.
$
\end{proposition}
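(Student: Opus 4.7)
The plan is to reduce the computation to two elementary cases for the single-indeterminate operator $\mu_{X_i}$, then assemble them using the commutativity result (Proposition~\ref{commutative}). Specifically, unwinding Definition~\ref{def:mux} gives two facts that I will establish first:
\begin{itemize}
    \item If the indeterminate $X_i$ does not appear in $P$, then $P_i^1 = 0$ so $\mu_{X_i}(P) = P + X_i P = (1+X_i)P$.
    \item If every monomial of $P$ contains $X_i$, then $P_i^0 = 0$ so $\mu_{X_i}(P) = X_i P_i^1 = P$; that is, $P$ is fixed by $\mu_{X_i}$.
\end{itemize}

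Next I would invoke Proposition~\ref{commutative} to reorder the operators in $\mu_{[n]} = \mu_{X_{i_1}} \circ \cdots \circ \mu_{X_{i_n}}$ so that all $\mu_{X_j}$ with $j \notin I$ are applied first, and all $\mu_{X_i}$ with $i \in I$ are applied afterwards. Starting from $X^I$, none of the indeterminates $X_j$ for $j \in [n] \setminus I$ occur. I would then argue by a short induction on $|[n]\setminus I|$: at each step, applying $\mu_{X_j}$ to a polynomial $Q$ in which $X_j$ does not occur multiplies $Q$ by $(1+X_j)$, and since $X_j$ still does not occur in the factors $(1+X_k)$ for other $k \notin I$, the hypothesis for the next step is preserved. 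After processing all $j \in [n] \setminus I$, we obtain
\[
\Bigl(\prod_{j \in [n] \setminus I}(1+X_j)\Bigr)\, X^I.
\]

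It remains to handle the operators $\mu_{X_i}$ for $i \in I$. Expanding the product above, every monomial contains the factor $X^I$, hence in particular contains $X_i$ for every $i \in I$. By the second elementary fact, each $\mu_{X_i}$ with $i \in I$ leaves the polynomial unchanged, so the final output of $\mu_{[n]}$ is exactly $X^I \prod_{i \in [n] \setminus I}(1+X_i)$, which is $M_I$ by the notation just introduced. The second statement, $\mu_{[n]}(M_I) = X^I$, then follows immediately from Corollary~\ref{cor:involutive}.

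The only mild obstacle is the bookkeeping in the inductive step, i.e.\ confirming that after multiplying by $(1+X_{j_1})\cdots(1+X_{j_k})$ the indeterminate $X_{j_{k+1}}$ still does not occur, so that the ``multiplication by $(1+X_{j_{k+1}})$'' case of $\mu_{X_{j_{k+1}}}$ continues to apply; this is immediate since the new factors only introduce indeterminates indexed in $[n]\setminus I$ distinct from $j_{k+1}$. Beyond that the argument is purely mechanical.
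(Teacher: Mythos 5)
Your proposal is correct and follows essentially the same route as the paper: the paper's own proof is a terse ``by recurrence from Definition~\ref{def:mux}'' argument using exactly the two special cases you isolate (the operators $\mu_{X_i}$ with $i\in I$ fix $X^I$, those with $j\notin I$ multiply by $1+X_j$), followed by the involutivity of $\mu_{[n]}$ for the second claim. You merely make explicit the appeal to commutativity (Proposition~\ref{commutative}) and the inductive bookkeeping that the paper leaves implicit.
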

\begin{proof}
Thanks to Definition~\ref{def:mux}, we obtain by recurrence
\begin{eqnarray*}
\mu_{[n]}(X^I) & = & X^I \ \times \ \mu_{\lbrace X_i\ :\ i\in[n]\setminus I\rbrace}(X^I)\\
 & = & X^I \ \times \ \prod_{i \in [n]\setminus I} \left(1 + X^i \right).\\
\end{eqnarray*}
 And finally Proposition~\ref{prop:involutive} holds the last statement.
\end{proof}
This provide an alternative proof $\mu(x^u) = M^u$ and $\mu(M^u) = x^u$.

\section{A new method to compute the M\"obius transform}
\label{sec:MobPoV}
We have introduced the M\"obius transform over polynomials and show that it is possible to perform the computations in several steps 
with various orders thanks to the partial operators $\mu_{X_i}$. 
We propose to firstly reformulate the M\"obius transform over polynomials in order to introduce two new algorithms based on this reformulation.

\subsection{Reformulation of M\"obius transform}

To introduce our reformulation let us to present a new operator given in the following definition.
\begin{definition}[Exclusive multiplication]
Let $P$ be a polynomial over $\K_2$ and $i \in \N$, $P_i^0$ and $P_i^1$ such that 
$P = P_i^0 + X_i P_i^1$.   
We define the {\em exclusive multiplication}, noted $\otimes$, as
\[
P \otimes X_i = X_i P_i^0.
\]
Let $I$ be a finite subset of $\N$, we generalize the definition for a monomial $X^I$.
\[
P\otimes X^I = X^I P_{\not\vert \; I},
\]
where  $P_{\not\vert \; I}$ is formed with the monomials of $P$ which contain no variables $X_i$, with $i \in I$.
  
We may now generalize for any polynomial $Q$.
Let $\mathcal I$ be a set of finite subsets of $\N$ and $Q = \sum_{I \in \mathcal I} X^I$,
\[
P \otimes Q = \sum_{I \in \mathcal I} P \otimes X^I.  
\]
\end{definition}
\begin{proposition}
\label{prop:commut}
Let $I = \{ i_1,\ldots, i_k \}$ be a finite subset of $\N$. 
\[
P \otimes X^I = (\ldots (P \otimes X_{i_1}) \otimes X_{i_2}) \otimes \ldots X_{i_k}). 
\]

\end{proposition}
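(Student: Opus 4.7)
The plan is to prove the equality by induction on $k = |I|$, exploiting the unfolding of the generalized definition $P \otimes X^I = X^I P_{\not\vert\; I}$.

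For the base case $k = 1$, we have $I = \{i_1\}$, and both sides reduce to $P \otimes X_{i_1} = X_{i_1} P_{i_1}^0$ directly from the scalar definition. For the inductive step, I would assume the formula holds for $I' = \{i_1, \ldots, i_{k-1}\}$ and compute
\[
(\ldots (P \otimes X_{i_1}) \otimes \ldots \otimes X_{i_{k-1}}) \otimes X_{i_k}
= \bigl(X^{I'}\, P_{\not\vert\; I'}\bigr) \otimes X_{i_k}.
\]
Setting $Q = X^{I'} P_{\not\vert\; I'}$, the scalar definition gives $Q \otimes X_{i_k} = X_{i_k} Q_{i_k}^0$, so I would need to identify the component of $Q$ free of $X_{i_k}$.

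The key observation is that $X^{I'}$ contains none of the indeterminates $X_{i_k}$ (since the indices in $I$ are distinct), so writing $P_{\not\vert\; I'} = R_0 + X_{i_k} R_1$ with $R_0, R_1$ free of $X_{i_k}$, one gets $Q_{i_k}^0 = X^{I'} R_0$. By the meaning of the subscript, $R_0$ consists precisely of those monomials of $P$ that contain no $X_j$ with $j \in I' \cup \{i_k\} = I$, hence $R_0 = P_{\not\vert\; I}$. Putting this together yields $X_{i_k} X^{I'} P_{\not\vert\; I} = X^I P_{\not\vert\; I} = P \otimes X^I$, closing the induction.

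The only delicate point is the bookkeeping of the "free of certain variables" subscripts; a clean way to make it rigorous is to decompose $P$ from the start as a sum indexed by subsets $J \subseteq I$ according to which variables from $I$ each monomial uses, and to check that the iterated operation kills exactly the monomials with $J \neq \emptyset$. I expect no real obstacle beyond carefully tracking these subscripts; the commutativity analogue of Proposition~\ref{commutative} is not needed here since the statement fixes one particular enumeration of $I$, though the proof makes it clear that any enumeration of $I$ produces the same result (and thus justifies the notation $P \otimes X^I$).
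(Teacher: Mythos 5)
Your induction is correct: the base case matches the scalar definition, and the key step --- that the $X_{i_k}$-free part of $X^{I'}P_{\not\vert\; I'}$ is $X^{I'}P_{\not\vert\; I}$ because $X^{I'}$ involves none of the remaining indeterminates --- is exactly the right bookkeeping. The paper states Proposition~\ref{prop:commut} without any proof, so there is nothing to compare against; your argument is the natural one and validly fills the gap (and, as you note, it also justifies the well-definedness of the notation $P\otimes X^I$ independently of the enumeration of $I$).
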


Thanks to the previous definition, we can reformulate the M\"obius transform of the Boolean function as a multiplication; this is the result of the following proposition.
\begin{proposition}
  \label{prop:MobiusAsMult}
Let $P$ be a polynomial over $\K_2^n$ and $i \in \N$.
\[
P \otimes (1 + X_i) = \mu_{X_i}(P).
\]
\end{proposition}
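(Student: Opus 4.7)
The plan is to unfold both sides of the equality using their definitions and compare. Write $P$ in its Reed–Muller decomposition with respect to $X_i$, namely $P = P_i^0 + X_i P_i^1$, where neither $P_i^0$ nor $P_i^1$ contains the indeterminate $X_i$. By Definition~\ref{def:mux}, the right-hand side is immediately
\[
\mu_{X_i}(P) = P_i^0 + X_i(P_i^0 + P_i^1),
\]
so the task reduces to showing that $P \otimes (1+X_i)$ equals this same polynomial.

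For the left-hand side, I would exploit that the exclusive multiplication is defined by linearity on the second argument once the latter is written as a sum of monomials. Since $1 + X_i = X^{\emptyset} + X^{\{i\}}$, the definition yields
\[
P \otimes (1+X_i) = P \otimes X^{\emptyset} + P \otimes X^{\{i\}}.
\]
The first term is $X^{\emptyset} P_{\not\vert\,\emptyset} = P$, because the constraint of containing no $X_j$ with $j\in\emptyset$ is vacuous. For the second term, the single-indeterminate case of the definition gives $P \otimes X_i = X_i P_i^0$, since $P_i^0$ is exactly the part of $P$ free of $X_i$.

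Adding these in characteristic $2$ gives
\[
P \otimes (1+X_i) = P + X_i P_i^0 = (P_i^0 + X_i P_i^1) + X_i P_i^0 = P_i^0 + X_i(P_i^0 + P_i^1),
\]
which matches $\mu_{X_i}(P)$ exactly. There is essentially no obstacle here: the statement is a direct verification once the definitions of $\otimes$ and $\mu_{X_i}$ are aligned against the same Reed–Muller decomposition of $P$. The only minor care required is justifying the use of linearity of $\otimes$ in its right argument on the two-term sum $1 + X_i$, which is built directly into its definition on sums of monomials.
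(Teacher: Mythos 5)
Your proof is correct and follows essentially the same route as the paper's: both expand $P\otimes(1+X_i)$ as $P + P\otimes X_i = P + X_iP_i^0$ via the Reed--Muller decomposition $P = P_i^0 + X_iP_i^1$ and simplify in characteristic $2$ to recover $\mu_{X_i}(P)$. Your extra remark identifying $1$ with $X^{\emptyset}$ to justify $P\otimes 1 = P$ is a small clarification the paper leaves implicit.
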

\begin{proof}
Let $P_i^0$ and $P_i^1$ such that $P = P_i^0 + X_i P_i^1$.   
$$
\begin{array}{lcl}
P \otimes (1 + X_i) & = & P + P \otimes X_i\\
& = & P + X_i P^0_i\\
& = & P^0_i + X_i (P^0_i + P^1_i)\\
& = & \mu_{X_i}(P).
\end{array}
$$
\end{proof}

Thanks to the previous results, we obtain the following corollary, which supplies a new reformulation of the M\"obius transform.
\begin{corollary}
\label{cor:reformulation}
Let $P \in \K_2[X_1,\dots, X_n]/(<X_1^2, \dots,X_n^2>)$. Then 
  $$
  \mu_{[n]}(P) = P \otimes \prod_{i=1}^n (1 + X_i).
  $$
\end{corollary}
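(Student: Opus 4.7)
The plan is to reduce to the case where $P$ is a single monomial, since both sides of the claimed identity are $\K_2$-linear in $P$, and then verify the monomial case by a direct computation that invokes Proposition~\ref{prop:monomialMinterm}. This avoids having to establish any general ``associativity'' of $\otimes$, which is not part of its definition (and in fact fails in general, e.g.\ because of the quotient relations $X_i^2 = X_i$).

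The first step is to expand the product on the right as $\prod_{i=1}^{n}(1+X_i) = \sum_{I \subseteq [n]} X^I$ and apply the linearity of $\otimes$ in its right argument (which is built into its definition) to obtain
\[
P \otimes \prod_{i=1}^{n}(1+X_i) \;=\; \sum_{I \subseteq [n]} P \otimes X^I .
\]
Next I would observe that the left argument of $\otimes$ is also additive: from $P \otimes X_i = X_i P_i^0$ and the fact that the Reed--Muller decomposition $P = P_i^0 + X_i P_i^1$ is itself $\K_2$-linear in $P$, additivity propagates through the iterated definition of $P \otimes X^I$. Symmetrically, the operators $\mu_{X_i}$ are $\K_2$-linear by Definition~\ref{def:mux}, hence so is $\mu_{[n]}$. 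Thus it suffices to verify the corollary on a monomial $P = X^J$ with $J \subseteq [n]$.

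For the monomial case, the left-hand side is supplied directly by Proposition~\ref{prop:monomialMinterm}: $\mu_{[n]}(X^J) = X^J \prod_{i \in [n]\setminus J}(1+X_i)$. For the right-hand side, the definition of $\otimes$ gives $X^J \otimes X^I = X^I (X^J)_{\not\vert I}$, which equals $X^{I\cup J}$ when $I \cap J = \emptyset$ and vanishes otherwise. Summing over $I \subseteq [n]$ collapses to the sum over $I \subseteq [n]\setminus J$, giving
\[
X^J \otimes \prod_{i=1}^{n}(1+X_i) \;=\; X^J \sum_{I \subseteq [n]\setminus J} X^I \;=\; X^J \prod_{i \in [n]\setminus J}(1+X_i),
\]
which matches $\mu_{[n]}(X^J)$.

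The only step requiring care -- and what I expect to be the main (small) obstacle -- is the justification that $\otimes$ is linear in its \emph{left} argument: the definition is phrased monomially on the right and uses a Reed--Muller decomposition of the left operand, so one must briefly observe that $(P+P')_i^0 = P_i^0 + P'_i^0$ to conclude that $(P+P') \otimes X_i = P\otimes X_i + P'\otimes X_i$, and then iterate via Proposition~\ref{prop:commut}. Once this and the corresponding additivity of $\mu_{[n]}$ are acknowledged, the proof is essentially a bookkeeping argument combining the expansion of $\prod_i(1+X_i)$, the monomial identity above, and Proposition~\ref{prop:monomialMinterm}.
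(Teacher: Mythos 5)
Your proof is correct, but it takes a different route from the paper. The paper obtains this corollary by chaining two facts: Proposition~\ref{prop:MobiusAsMult}, which identifies each elementary step $\mu_{X_i}(P)$ with $P\otimes(1+X_i)$, and the distributivity property of Proposition~\ref{prop:distriOperator}, which collapses the iterated products $(\cdots(P\otimes(1+X_1))\otimes\cdots)\otimes(1+X_n)$ into the single product $P\otimes\prod_{i=1}^n(1+X_i)$. You instead expand $\prod_{i=1}^n(1+X_i)=\sum_{I\subseteq[n]}X^I$, use $\K_2$-linearity of both $\mu_{[n]}$ and $P\mapsto P\otimes X^I$ to reduce to $P=X^J$, and then match both sides against Proposition~\ref{prop:monomialMinterm}. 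Both arguments are sound; the paper's emphasizes the structural fact that each $\mu_{X_i}$ \emph{is} an exclusive multiplication (which it needs anyway for the follow-up corollary and for the ring-theoretic interpretation in $\R_n$), while yours is a basis verification resting on the monomial--minterm duality $\mu(x^u)=M_u$, and is arguably more elementary. Two small remarks: the left-linearity you worry about is immediate from the closed form $P\otimes X^I = X^I\,P_{\not\vert\, I}$ in the definition (the projection $P\mapsto P_{\not\vert\, I}$ is linear), so no iteration via Proposition~\ref{prop:commut} is needed; and your parenthetical claim that associativity of $\otimes$ ``fails in general because of the quotient relations $X_i^2=X_i$'' is off --- the quotient is by $\spann{X_1^2,\dots,X_n^2}$, i.e.\ $X_i^2=0$, and the paper in fact shows $\otimes$ coincides with the (associative) multiplication of $\R_n$. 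Neither point affects the validity of your argument.
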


\begin{proposition}
\label{prop:distriOperator}
Let $P$ be a polynomial over $\K_2$ and $i$ and $j \in \N$.
$$
(P \otimes (1 + X_i)) \otimes (1 + X_j) = P \otimes ((1 + X_i) (1 + X_j)). 
$$
\end{proposition}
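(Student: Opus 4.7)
The plan is to reduce both sides to identical sums of elementary $\otimes$-products and then invoke Proposition~\ref{prop:commut} to conclude. The only structural ingredient needed is the bilinearity of $\otimes$. Linearity in the second argument is explicit in the definition, since $P \otimes Q$ is defined as $\sum_{I \in \mathcal{I}} P \otimes X^I$ when $Q = \sum_{I \in \mathcal{I}} X^I$. Linearity in the first argument follows because $P \otimes X^I = X^I \, P_{\not\vert \; I}$ and the map $P \mapsto P_{\not\vert \; I}$, which simply drops all monomials of $P$ meeting $\{X_i : i \in I\}$, is obviously additive over $\K_2$.

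With bilinearity in hand, I would first expand the left-hand side:
\[
(P \otimes (1+X_i)) \otimes (1+X_j) = (P + P \otimes X_i) \otimes (1+X_j) = P + P \otimes X_j + P \otimes X_i + (P \otimes X_i) \otimes X_j .
\]
On the right, distributing $(1+X_i)(1+X_j) = 1 + X_i + X_j + X_iX_j$ and applying linearity in the second slot gives
\[
P \otimes \bigl((1+X_i)(1+X_j)\bigr) = P + P \otimes X_i + P \otimes X_j + P \otimes X_iX_j .
\]
The two expressions agree term by term except on the last summand, where the required identity $(P \otimes X_i) \otimes X_j = P \otimes X_iX_j$ is exactly Proposition~\ref{prop:commut} specialized to $I = \{i,j\}$.

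The only potential friction is that left-linearity of $\otimes$ is not isolated as a separate lemma earlier in the paper, so I would spend one sentence pointing back to the definition to justify it. Once that is done the argument is a routine bilinear expansion, and no case analysis on the shape of $P$ is needed.
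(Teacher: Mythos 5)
Your proof is correct, but it takes a genuinely different route from the paper's. The paper argues by brute force: it writes $P = P_1 + X_i P_2 + X_j P_3 + X_i X_j P_4$ with $P_1,\dots,P_4$ free of $X_i$ and $X_j$, expands $(P\otimes(1+X_i))\otimes(1+X_j)$ and $P\otimes((1+X_i)(1+X_j))$ separately, and checks that both reduce to $P_1 + X_i(P_1+P_2) + X_j(P_1+P_3) + X_iX_j(P_1+P_2+P_3+P_4)$. You instead isolate the structural content: bilinearity of $\otimes$ (right-linearity holds by definition, left-additivity because $P\mapsto P_{\not\vert \; I}$ is additive over $\K_2$) reduces the whole identity to the single equation $(P\otimes X_i)\otimes X_j = P\otimes X_iX_j$, which is Proposition~\ref{prop:commut} with $I=\{i,j\}$. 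Your version is cleaner and scales immediately to a product of any number of factors $(1+X_{i_1})\cdots(1+X_{i_k})$, which is what the corollary following this proposition actually requires, whereas the paper's computation is specific to two factors; on the other hand the paper's proof is self-contained, while yours leans on Proposition~\ref{prop:commut}, which the paper states without proof --- if you use it you should note it follows in one line from $(X_{i_1}P_{\not\vert \; i_1})_{\not\vert \; i_2} = X_{i_1}P_{\not\vert \; \{i_1,i_2\}}$. Both your argument and the paper's tacitly assume $i\neq j$.
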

\proof
Let $P = P_1 + X_i P_2 + X_j P_3 + X_i X_j P_4$.
By Proposition~\ref{prop:MobiusAsMult},
$$
\begin{array}{lcl}
P \otimes (1 + X_i) & = & P_1 + X_j P_3 + X_i(P_1 + X_j P_3 + P_2 + X_j P_4)\\
                    & = & (P_1 + X_i P_1 + X_i P_2) + X_j (P_3 + X_i P_3 + X_i P_4)\\
(P \otimes (1 + X_i)) \otimes (1 + X_j) & = & P_1 + X_i P_1 + X_i P_2 + X_j (P_1 + X_i P_1 + X_i P_2 + P_3 + X_i P_3 + X_i P_4)\\
  & = & P_1 + X_i (P_1 + P_2) + X_j (P_1 + P_3) + X_i X_j (P_1 + P_2 + P_3 + P_4)\\
P \otimes  ((1 + X_i) (1 + X_j)) & = & P \otimes (1 + X_i + X_j + X_i X_j)\\
       & = & P_1 + P_2 + X_j P_3 + X_i X_j P_4 + X_i P_1 \\
&& + X_i X_j P_3 + X_j P_1 + X_i X_j P_2 + X_i X_j P_1\\
   & = &  (P \otimes (1 + X_i)) \otimes (1 + X_j)
\end{array}
$$
\begin{corollary}
Let $P \in \K_2[X_1,\dots, X_n]/(<X_1^2, \dots,X_n^2>)$. Then 
  $$
  \mu_{[n]}(P) = P \otimes (1 + X_1) \otimes (1 + X_2)\ldots \otimes (1 + X_n).
  $$
\end{corollary}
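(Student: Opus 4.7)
The statement is essentially a bookkeeping corollary: the reformulation \ref{cor:reformulation} already gives $\mu_{[n]}(P) = P \otimes \prod_{i=1}^{n} (1+X_i)$, so all that is really needed is to justify that the single $\otimes$-product against the polynomial $\prod_{i=1}^{n}(1+X_i)$ can be replaced by the iterated sequence $\otimes(1+X_1)\otimes\cdots\otimes(1+X_n)$. My plan is to do this by induction on $n$.

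For the base case $n=1$, both sides of the displayed identity literally read $P \otimes (1+X_1)$, so there is nothing to check. For the inductive step, I would assume the identity for $n-1$ and write
\[
\mu_{[n]}(P) \;=\; \mu_{X_n}\bigl(\mu_{[n-1]}(P)\bigr),
\]
which is valid because Proposition~\ref{commutative} (together with the notation introduced right after it) shows $\mu_{[n]} = \mu_{X_n}\circ\mu_{\{1,\dots,n-1\}}$. Applying the induction hypothesis to $\mu_{[n-1]}(P)$ converts it into $P \otimes (1+X_1)\otimes\cdots\otimes(1+X_{n-1})$, and then invoking Proposition~\ref{prop:MobiusAsMult} on the outer $\mu_{X_n}$ turns that application into a further $\otimes (1+X_n)$, yielding exactly the claimed form.

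As an alternative route that stays entirely within the $\otimes$-calculus of this section, one can instead chain Proposition~\ref{prop:distriOperator} repeatedly: the two-factor identity $(P\otimes(1+X_i))\otimes(1+X_j) = P\otimes((1+X_i)(1+X_j))$ can be iterated on the left end to absorb factors one at a time into the product $\prod_{i=1}^{n}(1+X_i)$. Here one would show, again by induction on $k$, that
\[
P \otimes (1+X_1)\otimes\cdots\otimes(1+X_k) \;=\; P \otimes \prod_{i=1}^{k}(1+X_i),
\]
the step $k\to k+1$ being a direct application of Proposition~\ref{prop:distriOperator} with $P$ replaced by $P\otimes(1+X_1)\otimes\cdots\otimes(1+X_{k-1})$.

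I do not expect a real obstacle: both routes are short. The only subtle point to be honest about is that $\otimes$ is neither associative nor commutative as a binary operation on arbitrary polynomials (because $P\otimes X^I$ annihilates monomials meeting $I$), so one cannot just move factors around freely. This is precisely why Proposition~\ref{prop:distriOperator} was proved only for factors of the special shape $1+X_i$ with distinct indices; the induction above uses it exactly in that regime, so the argument is safe.
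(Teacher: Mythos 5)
Your first route is correct and complete, and it is essentially the argument the paper intends: the corollary is stated without proof as an immediate consequence of the surrounding results, and indeed Proposition~\ref{prop:MobiusAsMult} gives $Q\otimes(1+X_i)=\mu_{X_i}(Q)$ for an \emph{arbitrary} polynomial $Q$, so the iterated product $P\otimes(1+X_1)\otimes\cdots\otimes(1+X_n)$ is literally the composition $\mu_{X_n}\circ\cdots\circ\mu_{X_1}$ applied to $P$, which equals $\mu_{[n]}(P)$ by the commutativity of Proposition~\ref{commutative}. Your induction is a careful write-up of exactly this; the only pedantic point is that the induction hypothesis must be taken in the slightly stronger form ``for all $P$, possibly containing further indeterminates, $\mu_{\{1,\dots,k\}}(P)=P\otimes(1+X_1)\otimes\cdots\otimes(1+X_k)$'', since at the step for $n$ you apply it to a polynomial that may involve $X_n$; this is harmless because none of the operators care about extra indeterminates.

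One caveat on your alternative route. Proposition~\ref{prop:distriOperator} is stated only for two atomic factors $(1+X_i)$ and $(1+X_j)$, so the step $k\to k+1$ is not a \emph{direct} application of it: substituting $P'=P\otimes(1+X_1)\otimes\cdots\otimes(1+X_{k-1})$ yields $P'\otimes\left((1+X_k)(1+X_{k+1})\right)$, which still carries $k-1$ unabsorbed $\otimes$-factors inside $P'$ and does not telescope to $P\otimes\prod_{i=1}^{k+1}(1+X_i)$. What that route actually needs is the generalization $(P\otimes Q)\otimes(1+X_j)=P\otimes\left(Q(1+X_j)\right)$ for $Q$ a product of distinct factors $1+X_i$ with $i\neq j$; this is true and provable by the same decomposition used in the proof of Proposition~\ref{prop:distriOperator}, but it is not what that proposition literally says. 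Since your first route stands on its own, the corollary is established.
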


Now, we propose to build an algebraic structure such that the
exclusive multiplication becomes the canonical multiplication in this new structure. Thus we have to create, an algebraic structure such that all monomials containing square indeterminates are projected on zero. We naturally researched a ring which is quotiented by an ideal which represent all these monomials. Thus we obtain the following proposition.
\begin{proposition}
  Let $\I_n$ be the ideal of $\K_2[X_1, \dots, X_n]$ spanned by all the
  indeterminates with a power of two, that is 
  $$
  \I_n = \spann{X_1^2, \cdots,X_n^2}.
  $$
  Then the exclusive multiplication is the natural multiplication in
  the ring
  $$
  \R_n = \K_2[X_1,\dots, X_n] / \I_n.
  $$
\end{proposition}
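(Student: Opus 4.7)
The plan is to exhibit the canonical projection $\varphi\colon \K_2[X_1,\ldots,X_n] \to \R_n$ and show that, modulo $\I_n$, the usual ring multiplication of $\R_n$ agrees term by term with the exclusive multiplication $\otimes$. The crucial structural observation is that $\R_n$ is generated as a $\K_2$-vector space by the squarefree monomials $X^I$ for $I \subseteq [n]$, because every class has a unique representative obtained by discarding monomials that contain any $X_i^2$. Hence both operations may be compared on this canonical basis.

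First I would handle the atomic case. Writing $P = P_i^0 + X_i P_i^1$ as in Definition~\ref{def:mux}, the ordinary product in $\R_n$ yields
\[
P \cdot X_i \;=\; X_i P_i^0 + X_i^2 P_i^1 \;\equiv\; X_i P_i^0 \pmod{\I_n},
\]
which is exactly $P \otimes X_i$ by definition. This is the heart of the argument: squaring kills precisely the monomials of $P$ that already carry $X_i$, which is what the exclusive multiplication was designed to capture.

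Next I would propagate this to a monomial $X^I = X_{i_1}\cdots X_{i_k}$. Applying the atomic case $k$ times gives $P \cdot X^I \equiv X^I \, P_{\not\vert\, I} \pmod{\I_n}$, matching the definition of $P \otimes X^I$; Proposition~\ref{prop:commut} (together with Proposition~\ref{commutative}) guarantees independence of the order in which the $X_{i_j}$ are multiplied, so both sides are unambiguous. Then for a general $Q = \sum_{I \in \mathcal I} X^I$, bilinearity of multiplication in $\R_n$ and the linearity built into the definition of $\otimes$ give
\[
P \cdot Q \;\equiv\; \sum_{I \in \mathcal I} P \cdot X^I \;\equiv\; \sum_{I \in \mathcal I} P \otimes X^I \;=\; P \otimes Q \pmod{\I_n}.
\]

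The only real subtlety to check is well-definedness on equivalence classes: one must verify that if $P \equiv P' \pmod{\I_n}$ and $Q \equiv Q' \pmod{\I_n}$, then $P \otimes Q \equiv P' \otimes Q' \pmod{\I_n}$, so that $\otimes$ descends to an operation on $\R_n$. This reduces, via the basis above, to observing that when $P$ and $Q$ are written in terms of squarefree monomials, the defining formula for $\otimes$ already produces squarefree output (since shared variables are filtered out by $P_{\not\vert\, I}$), so no additional reduction modulo $\I_n$ is necessary. I expect this bookkeeping step to be the main, though mild, obstacle, since everything else is a direct unravelling of definitions.
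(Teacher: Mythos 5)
Your proof is correct, but it takes a genuinely different route from the paper's. The paper never writes down the computation $P\cdot X_i \equiv P\otimes X_i \pmod{\I_n}$; instead its entire proof is a characterization of the ideal, showing by double inclusion that $\I_n$ consists exactly of the polynomials all of whose monomials contain some square $X_j^2$ (one inclusion from stability of the ideal under multiplication, the other from writing $a=\sum_i a_iX_i^2$). From that characterization the proposition is meant to follow because reduction modulo $\I_n$ amounts to deleting every monomial carrying a square, which is precisely the effect of the filter $P_{\not\vert\,I}$ in the definition of $\otimes$ --- but the paper leaves that final identification implicit. You do the converse: you take the quotient for granted and verify directly, via the atomic identity $P\cdot X_i = X_iP_i^0 + X_i^2P_i^1 \equiv X_iP_i^0 = P\otimes X_i$, then iteration over $X^I$ (using Proposition~\ref{prop:commut} for order-independence) and bilinearity, that the two multiplications agree. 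Your version is the more complete of the two, since it exhibits the computation that actually makes the operations coincide, and it is the one a reader would need to reconstruct anyway; the paper's version isolates the single structural fact (that $\I_n$ is spanned, as a $\K_2$-vector space, by the non-squarefree monomials) that justifies taking the squarefree monomials $X^I$, $I\subseteq[n]$, as a basis of $\R_n$. One small economy you could make: once you have $P\otimes Q \equiv P\cdot Q \pmod{\I_n}$ for arbitrary representatives, well-definedness on classes is automatic from congruence of products, so your final paragraph is not needed for the proposition itself --- what your squarefree-output observation does buy is the stronger statement that $P\otimes Q$ is the \emph{canonical} representative of the class of $PQ$, which is the form in which the result is used in the rest of the paper.
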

\begin{proof}
    We propose to prove by inclusion that the ideal $\I_n$ is exactly all monomial with at least a square indeterminates.

    Since $\I_n$ is an ideal, thus by the stability property, we have:
    $$
    \forall a \in \I_n,\ \forall P\in \K_2[X_1, \dots, X_n],\ a . P \in \I_n;
    $$
    thus all monomials containing a square indeterminate is into the ideal $\I_n$.
    
    Let $a \in \I_n$ be an element such that it does not contain any square indeterminate. Since $\I_n$ is spanned by $X_1^2, \cdots X_n^2$, then it exists $a_1, \cdots, a_n \in \K_2[X_1,\cdots,X_n]$ such that:
    $$
    a = \sum_{i=1}^{n}a_i . X_i^2.
    $$
    Since $a$ does not contain any square indeterminate then $\forall i \in \lbrace 1, \cdots, n \rbrace,\ a_i = 0;$ thus $a=0$. We obtain the statement.
    \end{proof}
\begin{corollary}
  Let $f \in \F_n$ be a Boolean function, then the computation of its
  M\"obius transform is only a multiplication on $\R_n$.
\end{corollary}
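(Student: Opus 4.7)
The plan is to chain together the three results immediately preceding the statement. Given $f \in \F_n$, I would first pass to its polynomial form $P = \pi_n(f)$, which by construction lives in $\R_n = \K_2[X_1,\dots,X_n]/\I_n$. By Proposition~\ref{prop:commutative_diagram}, the relevant diagram commutes, so computing $\mu(f)$ on the Boolean side is the same, via the bijection $\pi_n$, as computing $\mu_{[n]}(P)$ on the polynomial side.

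Next, by Corollary~\ref{cor:reformulation}, the polynomial operation can itself be expressed as a single exclusive multiplication against a fixed polynomial,
\[
\mu_{[n]}(P) \;=\; P \otimes \prod_{i=1}^{n}(1 + X_i).
\]
It then remains to convert the exclusive multiplication $\otimes$ into an ordinary multiplication. This is precisely the content of the proposition immediately preceding the statement: $\otimes$ kills any monomial in which some $X_i$ would appear with power $\geq 2$, and this is exactly the relation imposed by the ideal $\I_n = \spann{X_1^2,\dots,X_n^2}$. Hence in the quotient ring $\R_n$ the products $P \otimes Q$ and $P \cdot Q$ coincide.

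Putting the three steps together, $\pi_n(\mu(f))$ is obtained as the single product $P \cdot \prod_{i=1}^{n}(1+X_i)$ taken in $\R_n$, which is the claim. The only mildly subtle point I anticipate is checking that the left-to-right chain of partial operations $((P \otimes (1+X_1))\otimes \cdots )\otimes (1+X_n)$ really equals a multiplication by the fully expanded product $\prod_{i=1}^{n}(1+X_i)$; but this is exactly what Proposition~\ref{prop:distriOperator} guarantees (iterated $n-1$ times), so the identification of the full Möbius pipeline with one multiplication in $\R_n$ goes through cleanly.
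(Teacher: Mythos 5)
Your proposal is correct and matches the paper's (implicit) argument: the corollary is stated in the paper without a written proof precisely because it is the direct concatenation of the commutative-diagram proposition, the reformulation $\mu_{[n]}(P) = P \otimes \prod_{i=1}^{n}(1+X_i)$, and the proposition identifying $\otimes$ with the canonical multiplication in $\R_n$. Your extra remark about iterating the distributivity proposition to justify collapsing the chain of partial products into a single multiplication is a sensible bit of due diligence, but it does not change the route.
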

Thus we reformulate the M\"obius transform such that it is equivalent to canonical multiplication into the quotient ring $\R_n$.
\begin{example}\label{ex:MobiusByMult}
[Example~\ref{ex:mobius_computation} continued]
  With this reformulation, let us to compute again the M\"obius computation of the two variables Boolean function defined by its polynomial form $P = X_1 + X_1X_2$.
  \begin{eqnarray*}
    \mu_{[2]}(P) & = & (X_1+X_1X_2) \otimes (1+X_1) \otimes (1+X_2)\\
    & = & (X_1 + X_1X_2) \otimes (1+X_2)\\
    & = & X_1 + X_1X_2 + X_1X_2\\
    & = & X_1.
  \end{eqnarray*}
  We find exactly the same result that in Example~\ref{ex:mobius_computation}.
\end{example}
\begin{proposition}
  The exclusive multiplication is commutative.
\end{proposition}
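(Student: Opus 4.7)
The plan is to deduce commutativity directly from the corollary just established, which identifies the exclusive multiplication $\otimes$ with the canonical multiplication in the commutative quotient ring $\R_n = \K_2[X_1,\dots,X_n]/\I_n$. Since multiplication in any commutative ring is commutative, the commutativity of $\otimes$ follows at once; nothing deeper should be required, and I do not expect any genuine obstacle here.

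To make this justification airtight, I would point out that the outputs of $\otimes$ are already square-free polynomials (they contain no $X_i^2$ factor), so they are the canonical representatives of their classes in $\R_n$. Consequently the congruence $P \otimes Q \equiv P \cdot Q \pmod{\I_n}$ lifts to an honest equality of polynomials, and the commutativity of the ring product $P \cdot Q = Q \cdot P$ transfers verbatim to $\otimes$.

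If a more hands-on verification is preferred, I would argue by bilinearity. The map $P \mapsto P_{\not\vert \; I}$ is the $\K_2$-linear projection onto the monomials disjoint from $I$, so $P \mapsto P \otimes X^I = X^I P_{\not\vert \; I}$ is linear in $P$; linearity in the right argument is built into the definition. It then suffices to verify commutativity on pairs of monomials $X^I$ and $X^J$, and in that case a direct inspection gives
\[
X^I \otimes X^J = \begin{cases} X^{I \cup J} & \text{if } I \cap J = \emptyset, \\ 0 & \text{otherwise,} \end{cases}
\]
with the same formula holding for $X^J \otimes X^I$ by symmetry. Bilinearity then extends the identity to arbitrary polynomials $P$ and $Q$.

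The only point to keep an eye on is the conceptual one: one must interpret $P \otimes Q$ as a polynomial rather than as an equivalence class, and check that the outputs are already in canonical square-free form, so that equality modulo $\I_n$ coincides with equality of polynomials. Both are immediate from the definitions, so I would simply write the slick one-line proof via the previous corollary and relegate the monomial check to a parenthetical remark if reviewers demand it.
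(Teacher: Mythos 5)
Your proposal is correct and its main line of argument is exactly the paper's own proof: the exclusive multiplication is the canonical multiplication in the commutative quotient ring $\R_n$, hence commutative. The supplementary bilinearity-plus-monomials check and the remark about square-free canonical representatives are nice extra rigor (the paper glosses over the lift from congruence mod $\I_n$ to equality of polynomials), but they do not change the approach.
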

\begin{proof}
  The exclusive multiplication is only the canonical multiplication in $\R_n$, moreover $\K_2[X_1, \dots, X_n]$ is a commutative ring, then $\R_n$, that is the exclusive multiplication, also is. 
\end{proof}

\subsection{Algorithms to compute the M\"obius transform}


We propose in this Section an algorithm which compute the M\"obius transform 
with the multiplication $\otimes$. Firstly we will see that it is exactly the same than the iterative 
version of Butterfly algorithm when the algorithm is applied on 
a $2^n$ long bit vector which encodes which monomials occur in $P$ 
(which corresponds to the ANF of $\pi^{-1}_n(P)$). Thus the complexity is $n2^{n-1}$. 
Secondly, we consider $P$ as a list of monomials. 
In this case, we show that this algorithm is better than Butterfly algorithm over large classes 
of Boolean functions.\\

In the first hand, we propose to revisit the Butterfly algorithm and recall a previous improvement. And the other hand, we propose new algorithms from our previous results.

\subsubsection{Butterfly algorithm}

There exists a simple divide-and-conquer butterfly algorithm to perform the M\"obius transform, called 
the Fast M\"obius Transform. We work over $A$, a vector of size $2^n$ which encodes the ANF of 
a Boolean function $f$. Algorithm~\ref{algo:recursivebutterfly} gives the recursive version of the 
Fast M\"obius Transform.

\begin{algorithm}[ht]
  \caption{\label{algo:recursivebutterfly}Recursive butterfly algorithm \textsl{RBM}{(A,n)}}
  \KwIn{$A$ be the ANF (or truth table) of a Boolean function with $n$ variables.}
  \KwOut{the truth table (or ANF) corresponding to $A$.}
  \BlankLine
 \If{$n = 1$}{
   \If{$A = 00$ or $A=01$}{
     \KwRet{$A$}}
     \If{$A = 10$}{
       \KwRet{$11$}}
     \If{$A = 11$}{
       \KwRet{$10$}}
}
\Else{ 
 $A^0 \gets RBM(A[0]\ldots A[2^{n-1}-1],n-1)$\\
 $A^1 \gets RBM(A[2^{n-1}]\ldots A[2^n-1],n-1)$
}
\For{$i = 0$  \KwTo $2^{n-1}-1$}
{$A^1[i] \gets A^1[i] \oplus A^0[i]$}
\KwRet{$A^0||A^1$}
\end{algorithm}

We may directly apply the modifications over $A = A(f)$ without recursive calls. 
For $i =1$ to $n$, we split the string $A$ in $2^{n-i}$ pairs of strings $(A_1,A_2)$ of size $2^{i-1}$ 
and we replace $A_2$ by $A_1 \oplus A_2$, where $\oplus$ is here the bit-wise modulo $2$ sum. Thus it provides a butterfly algorithm working with the memory in place; that is no need extra memory and copy results. It result the Algorithm~\ref{algo:iterativebutterfly} which gives this iterative version of the Fast M\"obius Transform. It is quite the same algorithm introduced in \cite{NNF}, replacing plus operation by XOR.

\begin{algorithm}[ht]
  \caption{\label{algo:iterativebutterfly}Iterative butterfly algorithm \textsl{IBM}{(A,n)}}
  \KwIn{$A$ be the ANF (or truth table) of a Boolean function with $n$ variables.}
  \KwOut{the truth table (or ANF) corresponding to $A$.}
  \BlankLine
\For{$i = 1$ \KwTo $n$}{
 \For{$k = 0$ \KwTo $2^{n-i}-1$}{
  \For{$l = 0$ \KwTo $2^{i-1}-1$}{
      {$A[k*2^i+l+2^{i}] \gets A[k*2^i+l+2^{i}] \oplus A[k*2^i+l]$}
}}}
\KwRet{$A$}
\end{algorithm}

\subsubsection{Optimisation by isolated monomials}

In 2012, Calik Cagdas and Doganaksoy Ali, compute the Hamming weight of Boolean functions from the ANF \cite{CalikDog}. More exactly, a deep reading of this work shows that they compute the Hamming weight of Boolean functions from its polynomial form. Moreover, it provides a new algorithm which can be faster than the butterfly one over a subclass of Boolean function. The previous subclass is mainly defined by they called isolated monomials. That is they rewrite the polynomial form in isolating a monomial, and they take advantage to compute the Hamming weight, their method can be fully detailed in \cite[Algo.~4.1]{CalikDog}. An implementation is even available in \cite{soft_anf2weight}.

\subsubsection{Algorithm with the exclusive multiplication}

From Corollary~\ref{cor:reformulation}, we obtain directly the following algorithm to compute the M\"obius transform.

\begin{algorithm}[ht]
  \caption{\label{algo:reformulation}
    M\"obius transformation by the exclusive multiplication.}
  \KwIn{$P$ be a polynomial form of a Boolean function.}
  \KwOut{$Q$ be the polynomial such that $\mu_{[n]}(P) = Q$.}
  \BlankLine
  $P_0 \gets P$\\
  \For{$i = 1$ \KwTo $n$}{
    $P_i \gets P_{i-1} \otimes \left( 1 + X_i \right)$;
    }
  \KwRet{$P_n$}
\end{algorithm}

We change the point of view of the Algorithm~\ref{algo:reformulation}, in order to make the relation with the Butterfly algorithm.
We encode $P$ by a array $A = A(\pi_n^{-1}(P)))$ of length $2^n$ such that
for each $j = u_1 + u_2 2 \oplus \ldots + u_n 2^{n-1} \in \{0,\ldots, 2^n-1\}$ 
\[
A[j] = 1 \iff \mbox{ the monomial } X^{I_u} \mbox{ occurs in the ANF of } f.
\]
At the step $i$, ($P = P \otimes \left( 1 + X_i \right)$), 
we consider all the $j =  a_1 + a_2 2 \oplus \ldots + a_n 2^{n-1}$ such that $a_i = 0$ 
and we modify the value of $A[j+2^i]$ when $A[j] = 1$.\\

Algorithm~\ref{algo:mobiusiterative} gives the iterative version of Algorithm~\ref{algo:reformulation} 
over the vector $A$ which encodes the monomials. 

\begin{algorithm}[ht]
  \caption{\label{algo:mobiusiterative} Reformulation of Algorithm~\ref{algo:reformulation}.}
  \KwIn{$A$ be the ANF (or truth table) of a Boolean function with $n$ variables.}
  \KwOut{the truth table (or ANF) corresponding to $A$.}
  \BlankLine
 $A \gets A(\pi_n^{-1}(P))$\\
  \For {$i = 1$ \KwTo $n$}{
      \For {every $j = a_1 + a_2 2 \oplus \ldots + a_n 2^{n-1}$, where $a_i = 0$}{
           $A[j+2^i] \gets A[j+2^i] \oplus A[j]$}        
}
    \KwRet{$A$}
\end{algorithm}

We obtain exactly the same that algorithm~\ref{algo:iterativebutterfly}.
Indeed, let $i \in \{ 1,\ldots, n\}$ and 
$j = a_1 + a_2 2 \oplus \ldots + a_n 2^{n-1}$, where $a_i = 0$.
Let $l \in \{0, \ldots, 2^i-1\}$ and 
$k \in \{0,\ldots, 2^{n-i-1}-1\}$ such that
\[
\left\{
\begin{array}{lcl}
l & = &  a_1 + a_2 2+\ldots a_{i-1} 2^{i-2}\\
k & = & a_{i+1} + a_{i+2} 2+\ldots + a_n 2^{n-i-1} 
\end{array}
\right.
\]
It follows $j = l + 2^i k$ and the instruction $A[j+2^i] \gets 1 - A[j+2^i]$ is equivalent to 
$A[k*2^i+l+2^i] \gets 1 - A[k*2^i+l+2^i]$.

\subsubsection{Algorithm for list representation}

In this section, we manipulate Boolean function by its polynomial form given by the list of involved monomials. Hence we can avoid useless computation, as for example a XOR bit with zero. However, this representation suffers an extra memory cost compared to the vector representation.
\begin{proposition}
  \label{prop:naive_list_complexity}
 Let $P \in \R_n$ be a polynomial form of the Boolean function with $n$ variables. 
We denote by $P_i$, $i \in \{1, \ldots, n\}$ the polynomial involved in Algorithm~\ref{algo:reformulation} 
and $N(P_i)$ their number of monomials. Then  Algorithm~\ref{algo:reformulation} uses 
$\sum_{i=1}^n N(P_i)$ XORs.
\end{proposition}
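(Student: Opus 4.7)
The plan is to analyze each iteration of Algorithm~\ref{algo:reformulation} in isolation and sum the resulting XOR costs. At iteration $i$, the assignment $P_i \gets P_{i-1} \otimes (1+X_i)$ produces, by Proposition~\ref{prop:MobiusAsMult}, the polynomial $\mu_{X_i}(P_{i-1}) = (P_{i-1})^0_i + X_i\bigl((P_{i-1})^0_i + (P_{i-1})^1_i\bigr)$, where $(P_{i-1})^0_i$ collects the monomials of $P_{i-1}$ free of $X_i$ and $(P_{i-1})^1_i$ the (reduced) monomials containing $X_i$.

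First I would spell out the natural list-based realisation of $\otimes(1+X_i)$: scan the monomials of $P_{i-1}$, and for every monomial $m$ with $X_i \nmid m$, toggle $X_i m$ in the current list. Each toggle is a single XOR that either inserts a new monomial or cancels an existing occurrence, while monomials of $P_{i-1}$ already containing $X_i$ are left alone. Consequently step $i$ costs $N((P_{i-1})^0_i)$ XORs.

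Second I would compare this per-step cost with $N(P_i)$. Reading off from the decomposition above, one has $N(P_i) = N((P_{i-1})^0_i) + N((P_{i-1})^0_i + (P_{i-1})^1_i)$, and therefore $N((P_{i-1})^0_i) \leq N(P_i)$. Summing over $i \in \{1,\dots,n\}$ then yields the announced bound $\sum_{i=1}^n N(P_i)$.

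The main subtlety is reconciling the exact count $\sum_{i=1}^n N((P_{i-1})^0_i)$ with the quantity $\sum_{i=1}^n N(P_i)$ stated in the proposition: strict equality fails whenever a toggle cancels a pre-existing monomial in the \emph{with-$X_i$} part of $P_{i-1}$, so I expect the proof either to read the formula as an upper bound, or to adopt the slightly coarser cost model that charges one XOR per monomial written into the output list $P_i$, which is precisely the interpretation that turns the formula into an equality.
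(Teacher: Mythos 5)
Your argument rests on exactly the identity the paper's one-line proof invokes, namely $P_{i-1}\otimes(1+X_i)=P_{i-1}+P_{i-1}\otimes X_i$ from Proposition~\ref{prop:MobiusAsMult}, so the approach is the same; the difference is that you actually carry out the cost accounting, and your accounting is right. In particular, the ``subtlety'' you flag is a genuine imprecision in the statement rather than a defect of your proof: the exact number of list toggles is $\sum_{i=1}^n N\bigl((P_{i-1})^0_i\bigr)$, and this is what the paper itself reports elsewhere --- for $P=X_3+X_1X_2+X_1X_3$ (Example~\ref{ex:orderMult}, Table~\ref{tab:comparison}) the cost is given as $3=1+1+1$, whereas $\sum_{i=1}^3 N(P_i)=2+3+4=9$. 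So the displayed formula cannot be an exact count under the natural toggle model; it holds either as the upper bound you derive from $N\bigl((P_{i-1})^0_i\bigr)\le N(P_i)$ (which follows from $N(P_i)=N\bigl((P_{i-1})^0_i\bigr)+N\bigl((P_{i-1})^0_i+(P_{i-1})^1_i\bigr)$), or under the coarser ``one operation per monomial written into $P_i$'' model you propose. Your reading is the correct repair of the statement; the paper's proof, as written, does not address this discrepancy at all.
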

\begin{proof}
This is a direct implication of the equality $P_{i-1} \otimes (1 + X_i) = P_{i-1} + P_{i-1} \otimes X_i$ 
(see Proposition~\ref{prop:MobiusAsMult}).
\end{proof}
With Proposition~\ref{prop:naive_list_complexity}, we note that the number of monomials in the list representation is essential for the complexity. 
\begin{corollary}
Let $N = \max \{ N(P_i) \; | \; i \in \{1, \ldots, n\} \}$. 
Algorithm~\ref{algo:reformulation} uses 
at most $n \; N$ XORs.
\end{corollary}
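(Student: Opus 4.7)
The plan is to derive this corollary as an immediate consequence of Proposition~\ref{prop:naive_list_complexity}, which already delivers the exact XOR count $\sum_{i=1}^{n} N(P_i)$. So the only task is to turn this exact sum into the uniform upper bound $n \cdot N$.

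First, I would invoke Proposition~\ref{prop:naive_list_complexity} verbatim, so that the number of XORs performed by Algorithm~\ref{algo:reformulation} is written as $\sum_{i=1}^{n} N(P_i)$. Next, by the very definition of $N$ as the maximum of the $N(P_i)$ over $i \in \{1,\ldots,n\}$, each individual term satisfies $N(P_i) \le N$. Substituting this bound termwise in the sum yields
\[
\sum_{i=1}^{n} N(P_i) \;\le\; \sum_{i=1}^{n} N \;=\; n\,N,
\]
which is exactly the claimed bound.

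There is no genuine obstacle here: the corollary is a pure majorisation of a finite sum by its largest term times its length, and all the real content was already packaged in Proposition~\ref{prop:naive_list_complexity} (itself relying on the identity $P_{i-1} \otimes (1+X_i) = P_{i-1} + P_{i-1} \otimes X_i$ from Proposition~\ref{prop:MobiusAsMult}). The only thing one might wish to comment on is that this bound is often pessimistic, since in practice $N(P_i)$ can fluctuate significantly across the $n$ rounds and the sum $\sum_i N(P_i)$ may be much smaller than $nN$; but for stating a worst-case complexity the uniform bound $nN$ is the right shape to compare against Butterfly's $n\,2^{n-1}$.
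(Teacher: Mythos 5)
Your argument is correct and is exactly the intended one: the paper states this corollary without proof as an immediate consequence of Proposition~\ref{prop:naive_list_complexity}, and bounding each term $N(P_i)$ by the maximum $N$ to majorise the sum $\sum_{i=1}^{n} N(P_i)$ by $n\,N$ is precisely the implicit reasoning. Nothing is missing.
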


\begin{notation}
Let $P \in \R_n$ be a polynomial form of the Boolean function with $n$ variables. We denote $\bar{P}$ the polynomial form of the complementary Boolean function associated at polynomial $P$, that is
  $$
  P + \bar{P} = \prod_{i=1}^{n}(1 + X_i).
  $$
\end{notation}
Then we propose the following result in order to improve the complexity of our algorithm.
\begin{proposition}
  \label{prop:complementary_mobius}
  Let $P \in \R_n$ be a polynomial form of the Boolean function with $n$ variables. Then
  $$
  \begin{array}{lcl}
  \mu_{[n]}(\bar{P}) & = & \mu_{[n]}(P) + 1;\\
  \mu_{[n]}(P+1) & = & \mu_{[n]}(P) + \prod_{i=1}^{n}(1 + X_i) = \overline{\mu_{[n]}(P)}.
  \end{array}
  $$
\end{proposition}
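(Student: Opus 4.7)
The plan is to deduce both identities from two facts: the operator $\mu_{[n]}$ is $\K_2$-linear, and it exchanges the constant $1$ with the product $\prod_{i=1}^{n}(1+X_i)$. Once these are available, everything reduces to a one-line computation using the definition $\bar P = P + \prod_{i=1}^{n}(1+X_i)$.

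First I would check linearity of the one-variable operator $\mu_{X_i}$ directly from Definition~\ref{def:mux}. If $A = A_i^0 + X_i A_i^1$ and $B = B_i^0 + X_i B_i^1$, then the Reed--Muller parts of $A+B$ in $X_i$ are $A_i^0 + B_i^0$ and $A_i^1 + B_i^1$, so substituting into $P_i^0 + X_i(P_i^0 + P_i^1)$ immediately gives $\mu_{X_i}(A+B) = \mu_{X_i}(A) + \mu_{X_i}(B)$. Linearity of $\mu_{[n]}$ then follows from the Notation introduced after Proposition~\ref{commutative}, since it is a composition of linear operators. (Alternatively one can invoke Proposition~\ref{prop:commutative_diagram} together with linearity of the classical M\"obius transform on $\F_n$.)

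Next I would evaluate $\mu_{[n]}$ on $1$ and on $\prod_{i=1}^n(1+X_i)$. Taking $I = \emptyset$ in Proposition~\ref{prop:monomialMinterm} gives
\[
\mu_{[n]}(1) \;=\; X^{\emptyset} \prod_{i \in [n]}(1+X_i) \;=\; \prod_{i=1}^{n}(1+X_i),
\]
and then Corollary~\ref{cor:involutive} yields $\mu_{[n]}\!\left(\prod_{i=1}^{n}(1+X_i)\right) = 1$.

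Finally I would combine these ingredients. By the defining relation $\bar P + P = \prod_{i=1}^{n}(1+X_i)$, linearity gives
\[
\mu_{[n]}(\bar P) \;=\; \mu_{[n]}(P) + \mu_{[n]}\!\left(\textstyle\prod_{i=1}^{n}(1+X_i)\right) \;=\; \mu_{[n]}(P) + 1,
\]
which is the first identity. For the second, linearity and the computation of $\mu_{[n]}(1)$ above give
\[
\mu_{[n]}(P+1) \;=\; \mu_{[n]}(P) + \mu_{[n]}(1) \;=\; \mu_{[n]}(P) + \prod_{i=1}^{n}(1+X_i),
\]
and the right-hand side is $\overline{\mu_{[n]}(P)}$ by the definition of the complementary polynomial.

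I do not expect a real obstacle here; the only subtlety is making the linearity step explicit, since the paper has so far treated $\mu_{X_i}$ as a formula rather than as a linear operator. Everything else is a direct application of Proposition~\ref{prop:monomialMinterm} and Corollary~\ref{cor:involutive}.
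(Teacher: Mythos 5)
Your proof is correct, and it reaches the result by a slightly different route than the paper. The paper's own proof works entirely inside the exclusive-multiplication framework: it invokes Corollary~\ref{cor:reformulation} to write $\mu_{[n]}(Q) = Q \otimes \prod_{i=1}^n(1+X_i)$, then uses distributivity of $\otimes$ (which is just the ring multiplication in $\R_n$) together with the computations $\prod_{i=1}^n(1+X_i) \otimes \prod_{i=1}^n(1+X_i) = 1$ and $1 \otimes \prod_{i=1}^n(1+X_i) = \prod_{i=1}^n(1+X_i)$, the first of which reduces to $(1+X_i)^2 = 1$ in $\R_n$. You instead establish linearity of $\mu_{X_i}$ by hand from Definition~\ref{def:mux} (correctly, since the decomposition $P = P_i^0 + X_i P_i^1$ is unique and additive in $P$), and you obtain the two special values $\mu_{[n]}(1) = \prod_{i=1}^n(1+X_i)$ and $\mu_{[n]}\bigl(\prod_{i=1}^n(1+X_i)\bigr) = 1$ from Proposition~\ref{prop:monomialMinterm} with $I=\emptyset$ and Corollary~\ref{cor:involutive} rather than from a ring computation. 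The skeleton is the same in both cases --- linearity plus the exchange of $1$ and $\prod_{i=1}^n(1+X_i)$ --- but your version does not depend on Section~4's reformulation at all, only on the material of Section~3, which makes it marginally more self-contained; the paper's version is shorter because distributivity and the evaluation at $1$ come for free once $\mu_{[n]}$ is recognized as multiplication in $\R_n$. Your parenthetical remark that linearity could alternatively be pulled back from the classical M\"obius transform via Proposition~\ref{prop:commutative_diagram} is also sound.
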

\begin{proof}
  Let us to compute
  \begin{eqnarray*}
  \mu_{[n]}(\bar{P}) & = & \left( P + \prod_{i=1}^{n}(1 + X_i)\right) \otimes \prod_{i=1}^{n} (1 + X_i)\\
  & = & \left(P  \otimes \prod_{i=1}^{n} (1 + X_i) \right)+
  \left( \prod_{i=1}^{n}(1 + X_i) \otimes \prod_{i=1}^{n} (1 + X_i)\right)\\
  & = & \mu_{[n]}(P) + 1\\
  \mu_{[n]}(P+1) & = & \left( P + 1 \right) \otimes \prod_{i=1}^{n} (1 + X_i)\\
  & = & \left(P  \otimes \prod_{i=1}^{n} (1 + X_i) \right)+
  \prod_{i=1}^{n} (1 + X_i)\\
  & = & \mu_{[n]}(P) + \prod_{i=1}^{n} (1 + X_i).  
  \end{eqnarray*}
\end{proof}
Then if the list representation of the Boolean function is dense, we can take advantage and work on the complementary, which have a sparse representation. Thus mixing with previous results, we improve the complexity for the list representation.
\begin{corollary}
 Let $P \in \R_n$. We may perform Algorithm~\ref{algo:reformulation} 
with  $\min\left( \sum_{i=1}^n N(P_i), \sum_{i=1}^n N(\bar{P_i})\right)$ XORs.

\end{corollary}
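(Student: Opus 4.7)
The plan is to exhibit two procedures that both compute $\mu_{[n]}(P)$ and invoke whichever is cheaper. The first procedure is Algorithm~\ref{algo:reformulation} run directly on the input $P$; by Proposition~\ref{prop:naive_list_complexity} its cost is exactly $\sum_{i=1}^{n} N(P_i)$ XORs. The second procedure applies Algorithm~\ref{algo:reformulation} to the complementary polynomial $\bar{P}$ instead, producing $\mu_{[n]}(\bar{P})$ at cost $\sum_{i=1}^{n} N(\bar{P_i})$ XORs, and then recovers $\mu_{[n]}(P)$ from this output.

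For the recovery step I would invoke Proposition~\ref{prop:complementary_mobius}, which states $\mu_{[n]}(\bar{P}) = \mu_{[n]}(P) + 1$. Consequently $\mu_{[n]}(P) = \mu_{[n]}(\bar{P}) + 1$, and obtaining $\mu_{[n]}(P)$ from $\mu_{[n]}(\bar{P})$ amounts to flipping the constant coefficient, which is a constant-cost operation and is absorbed in the complexity analysis (alternatively, one may choose to not count it, since Proposition~\ref{prop:naive_list_complexity} only bookkeeps the XORs arising inside the loop of Algorithm~\ref{algo:reformulation}). Since both procedures yield the same final polynomial $\mu_{[n]}(P)$, selecting the one with the smaller cost produces the announced bound $\min\bigl(\sum_{i=1}^{n} N(P_i), \sum_{i=1}^{n} N(\bar{P_i})\bigr)$.

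The only non-routine point to justify carefully is the cost of the second procedure: one has to argue that running Algorithm~\ref{algo:reformulation} on the input $\bar{P}$ really produces the intermediate polynomials whose monomial counts are $N(\bar{P_i})$. This is immediate from the definition of $P_i$ in Proposition~\ref{prop:naive_list_complexity}: the $i$-th intermediate polynomial obtained from an input $Q$ is by construction the result of applying $\otimes(1+X_1)\cdots\otimes(1+X_i)$ to $Q$, so taking $Q=\bar{P}$ yields intermediate polynomials $\bar{P_i}$ in the exact sense used in the statement. Note that to know a priori which of the two sums is smaller one would have to compute them, but the corollary merely asserts the existence of an algorithm matching the minimum, so this choice is made by picking the better of the two runs (and in practical settings, the sparsity of $P$ versus $\bar{P}$ is usually known from the context, e.g. the Hamming weight of the ANF vector).

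I do not anticipate a genuine obstacle here: the whole argument is a direct corollary of Propositions~\ref{prop:naive_list_complexity} and~\ref{prop:complementary_mobius}. The only minor subtlety worth flagging in the write-up is making explicit that the constant-time correction $+1$ to pass from $\mu_{[n]}(\bar{P})$ back to $\mu_{[n]}(P)$ does not perturb the stated count of XORs.
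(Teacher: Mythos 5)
Your argument is correct and is exactly the route the paper intends: it states this corollary without proof, as an immediate combination of Proposition~\ref{prop:naive_list_complexity} (cost of the run on $P$ or on $\bar{P}$) with Proposition~\ref{prop:complementary_mobius} (recovering $\mu_{[n]}(P)$ from $\mu_{[n]}(\bar{P})$ by adding $1$). Your explicit remarks about the constant-cost correction and about which intermediate polynomials the $N(\bar{P_i})$ refer to only make precise what the paper leaves implicit.
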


%

%
Proposition~\ref{prop:complementary_mobius} is useful in our context, however this result is not dedicated to our reformulation, it is also true with truth table and ANF.

We remark that the order of the multiplication by the affine
polynomial plays an important role since we involved different polynomials $P_i$ 
when we change the order. To illustrate our claim, we
propose to make again Example~\ref{ex:MobiusByMult} by multiplying
with another order.
\begin{example}[Example~\ref{ex:MobiusByMult} continued]
  \label{ex:orderMult}
  Let $f\in \F_3$ be the Boolean function in Example~\ref{ex:MobiusByMult}, with the list representation we can see that we need only to add 3 monomials, that is
  $$
  P = [X_3, X_1X_2, X_1X_3].
  $$
  After the multiplication by affine polynomials, we obtain
  \begin{eqnarray*}
  P \otimes (1+X_1)  & =  & [ X_3, X_1X_2];\\
  P \otimes (1+X_1)\otimes (1+X_2) & = & [X_3, X_1X_2, X_2X_3];\\
  P \otimes (1+X_1)\otimes (1+X_2) \otimes (1 +X_3)& = & \mu_{[3]}(P) = [X_3, X_1X_2, X_2X_3, X_1X_2X_3].  
  \end{eqnarray*}
  If we process the exclusive multiplication in the different order the number of operation in the list, that is add or remove, will considerably increase:
  \begin{eqnarray*}
  P \otimes (1+X_2)  & =  & [X_3, X_1X_2, X_1X_3, X_2X_3, X_1X_2X_3]\\
  P \otimes (1+X_2)\otimes (1+X_1) & = & [X_3, X_1X_2, X_2X_3]\\
  P \otimes (1+X_2)\otimes (1+X_1) \otimes (1 +X_3)& = & \mu_{[3]}(P) = [X_3, X_1X_2, X_2X_3, X_1X_2X_3].
  \end{eqnarray*}
  We obtain the same result with 5 modifications, while
  Example~\ref{ex:MobiusByMult} obtain the same result with only 3.
\end{example}
We show that in Example~\ref{ex:orderMult} the order of the affine polynomials is really important on the number of list modifications. We propose a strategy to minimize the number of modifications: we propose to multiply
by $(1 + X_{i_0})$, where $i_0$ is the indeterminate which occurs the
most of time in the intern representation. Hence we maximize the number of monomials for which one, we do not perform modification. In this way, we propose Algorithm~\ref{algo:list_reformulation} which manage a good order to perform successive exclusive multiplications to obtain the M\"obius transform. 

\begin{algorithm}[ht]
  \caption{\label{algo:list_reformulation}
    Reformulated M\"obius transformation for the list representation}
  \KwIn{$L$ be the list representation of $f\in\F_n$.}
  \KwOut{$Mu$ be the list representation of the M\"obius transform of $f$.}
  \BlankLine

  $Mu \gets L;$\\
  $O \gets occurrence(L)$;\\
  \For{$i = 1$ \KwTo $n$}{
    $i_0 \gets argmax(O)$;\\
    $Mui \gets Mu$;\\
    \For{$M \in Mu$}{
      \If{not $(X_{i_0} \in M)$}
      {
        \If{$X_{i_0} \in Mu$}
        {
          $remove(Mui, X_{i_0}M)$;\\
          $update(O, X_{i_0}M, -1)$;
        }\Else{
          $add(Mui, X_{i_0}, M)$;\\
          $update(O, X_{i_0}M, 1)$;
        }
      }
    }
    $Mu \gets Mui$;\\
    $O[i_0] \gets -\infty$;
  }
  \KwRet{$Mu$}
\end{algorithm}
Where:
\begin{itemize}
\item $occurrence$ computes a table of size $n$ where the $i$-th component-wise gives the number of occurrences of $X_i$;
\item $remove(L, M)$ and $add(L, M)$ modify the list $L$ with the monomial $M$;
\item $update(O, M, value)$ modifies  the occurrence table $O$ for all variables into the monomial $M$ adding $value$. 
\end{itemize}

In Table~\ref{tab:comparison}, we compare our proposed algorithms with the literature.
We see that the list representation is only valuable for really sparse Boolean functions, or thanks to the complementary property, Proposition~\ref{prop:complementary_mobius}, and really dense ones.
\begin{table}[ht]
  \centering
    \caption{\label{tab:comparison}
    Number of XORs or list modifications needed to compute the M\"obius transform in worst case or for the special case $f =x_3 \otimes x_1x_2 \otimes x_1x_3 \in \F_3$.}
  \begin{tabular}{|c|c|c||c|}
    \cline{2-4}
    \multicolumn{1}{c|}{}& Butterfly & Algorithm in \cite{CalikDog} & Algorithm~\ref{algo:list_reformulation}\\
    \hline
    Complexity & $n2^{n-1}$ & & $\min\left( \sum_{i=1}^n N(P_i), \sum_{i=1}^n N(\bar{P_i})\right)$\\
    \hline
     $x_3 \oplus x_1x_2 \oplus x_1x_3$ & 12 & 10 & 3\\
    \hline
  \end{tabular}
\end{table}

\section{Direct M\"obius computations for some Boolean functions}
\label{sec:MoComp}
We have proposed a reformulation of the M\"obius transform which produces two new algorithms: one for the vector representation and the other for the polynomial form. The worst case of these algorithm happen when a variable $x_i$ does not appear. Hence this section is dedicated to directly compute the M\"obius transform and the Hamming weight of a Boolean function for the worst cases of proposed algorithms.

Please note that for the following propositions, we give the M\"obius transform for some families of Boolean functions. Thus, the computation cost of these Boolean functions is only their Hamming weight for simply write the result into the memory.

\begin{proposition}
\label{prop:monomial}
  Let $I \subset [n]$; then
\[
\mu_{[n]}(X^I) = X^I \; \prod_{j \in [n] \setminus I} (1\oplus X_j),
\]
and we have $w_H(\pi_n^{-1}(X^I)) = 2^{n- |I|}$.
\end{proposition}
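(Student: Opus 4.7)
The formula $\mu_{[n]}(X^I) = X^I \prod_{j \in [n] \setminus I}(1 \oplus X_j)$ is simply a restatement of Proposition~\ref{prop:monomialMinterm}, so the first assertion requires no new work. The plan is to quote it and then compute the Hamming weight.

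For the Hamming weight claim, I would present two short arguments that cross-check each other. The direct argument: by definition of $\pi_n$, the Boolean function $f = \pi_n^{-1}(X^I)$ is the monomial $x^I = \prod_{i \in I} x_i \in \F_n$, whose value at $a \in \K_2^n$ is $a^I = \prod_{i \in I} a_i$. This equals $1$ precisely when $a_i = 1$ for every $i \in I$, while the remaining $n - |I|$ coordinates of $a$ are unconstrained. Counting such $a$ gives $w_H(f) = 2^{n-|I|}$.

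The indirect argument, which confirms internal consistency of the polynomial viewpoint, goes through $\mu$: by the equivalences listed after Equation~(\ref{mobiusDefinition}), we have $T(f) = A(\mu(f))$, so $w_H(f)$ equals the number of monomials in the ANF of $\mu(f)$, which by Proposition~\ref{prop:commutative_diagram} is the number of monomials appearing in $\mu_{[n]}(X^I)$. Expanding
\[
X^I \prod_{j \in [n] \setminus I} (1 \oplus X_j) = \sum_{S \subseteq [n] \setminus I} X^{I \cup S},
\]
one sees that the monomials $X^{I \cup S}$ are pairwise distinct (they are parametrized by distinct subsets $S$), so the total count is $2^{|[n] \setminus I|} = 2^{n-|I|}$.

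There is no real obstacle: both steps are essentially bookkeeping once Proposition~\ref{prop:monomialMinterm} is in hand. The only mild care needed is to remember that $w_H$ counts minterms (ones in $T(f)$), not monomials in the ANF of $f$; the bridge from the monomial count of $\mu_{[n]}(X^I)$ to $w_H(\pi_n^{-1}(X^I))$ is exactly the ANF/truth-table swap performed by $\mu$, recorded in Proposition~\ref{prop:commutative_diagram}.
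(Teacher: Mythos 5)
Your proposal is correct and follows essentially the same route as the paper, whose proof is literally ``combine Proposition~\ref{prop:monomialMinterm} and Proposition~\ref{prop:commutative_diagram}''; your indirect argument spells out exactly that combination, and your direct truth-table count is a harmless bonus (the paper itself notes the alternative via $M_u = \bigoplus_{u \preceq v} x^v$, which is the same elementary counting in disguise).
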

\begin{proof}
  It is sufficient to combine Proposition~\ref{prop:monomialMinterm} and Proposition~\ref{prop:commutative_diagram}. This result could be also proved with the relation $M_u = \bigoplus_{u \preceq v} x^v$.
\end{proof}

We consider the following basic algorithm to compute $\mu_{[n]}(P)$ which involves the monomial $X^I$. We began with the word $w=(0,\ldots, 0)$ of length $2^n$; then for each monomial $X^I$, we flip the corresponding bits in $w$, hence the complexity depends on $|\bar{I}|$. For instance, if $P = X^I$, we obtain a complexity $2^{n-|I|}$.

For all $i\in [n]$, we find again that the Boolean functions of $\F_n$ 
given by the polynomial form $X_i$ are balanced functions.

\begin{definition}[Valuation]
  Let $P$ be a polynomial defined over a ring $\mathcal{R}$. The valuation of $P$ is the smallest degree of the set of its monomials.
\end{definition}

\begin{example}
  Let $P(X_1, X_2, X_3) = X_1 + X_2X_3$ and $Q(X_1, X_2, X_3) = 1 + X_2X_3 + X_1X_2X_3$ be polynomials over $\mathbb{F}_2[X_1,X_2,X_3]$, then
  $$
  val(P) = 1,\ val(Q) = 0.
  $$
  Moreover, in order to the valuation has order property, it is frequently assumed that $val(0) = - \infty$.
\end{example}
\begin{proposition}
  Let $P = \sum_{I  \in \mathcal I} X^I \in \R_n$ and $M =|\mathcal I|$. 
Then the M\"obius transform of $P$ and the Hamming weight of $\pi_n^{-1}(P)$ 
can be computed with a complexity $\sum_{I _in \mathcal I} 2^{n-|I|}$, with upper bound  
$M\ .\ 2^{n-val(P)}$.
\end{proposition}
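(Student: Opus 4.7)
The plan is to reduce to the single-monomial case of Proposition~\ref{prop:monomial}, exploiting the $\mathbb{F}_2$-linearity of the M\"obius transform, and then track bit-flips on the truth-table vector to bound both the M\"obius computation and the Hamming weight computation simultaneously.

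First I would recall from Proposition~\ref{prop:monomial} that for a single monomial $X^I$, one has $\mu_{[n]}(X^I) = X^I \prod_{j \in [n]\setminus I}(1+X_j)$, whose expansion is a sum of exactly $2^{n-|I|}$ monomials (all $X^J$ with $I \subseteq J \subseteq [n]$). Writing the output as a length-$2^n$ truth-table vector $w$ initialized to zero, I would process the monomials of $P$ one at a time: for each $I \in \mathcal I$, flip exactly the $2^{n-|I|}$ bits of $w$ indexed by the supersets of $I$. Since $\mu_{[n]}$ is $\mathbb{F}_2$-linear (its action is multiplication by $\prod_i(1+X_i)$ in $\R_n$ by Corollary~\ref{cor:reformulation}), the final content of $w$ is exactly $\mu_{[n]}(P)$, and the total cost is $\sum_{I \in \mathcal I} 2^{n-|I|}$ bit-flips.

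For the Hamming weight of $\pi_n^{-1}(P)$, which equals the number of $1$'s in $w$ at the end, I would maintain a running counter that is updated at each bit-flip (incrementing or decrementing by $1$ depending on whether a $0$ becomes a $1$ or vice versa). This piggy-backs on the same $\sum_{I \in \mathcal I} 2^{n-|I|}$ operations, so no extra complexity is incurred.

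Finally, for the upper bound, I would use the definition of $val(P)$: every $I \in \mathcal I$ satisfies $|I| \geq val(P)$, hence $2^{n-|I|} \leq 2^{n - val(P)}$, and summing over the $M$ monomials gives $\sum_{I \in \mathcal I} 2^{n-|I|} \leq M \cdot 2^{n-val(P)}$. There is no real obstacle here; the only subtle point is noting that the $\mathbb{F}_2$-linearity of $\mu_{[n]}$ makes the per-monomial bit-flip strategy correct even when multiple monomials contribute to the same output position (the XOR semantics of bit-flips matches addition in $\mathbb{F}_2$).
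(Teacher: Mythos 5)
Your proposal is correct and matches the paper's own argument: the paper likewise expands $\mu_{[n]}(P)=\bigoplus_{I\in\mathcal I}X^I\prod_{j\in[n]\setminus I}(1+X_j)$ by linearity, counts $2^{n-|I|}\le 2^{n-val(P)}$ terms per monomial, and (in the text just before the statement) realizes this as the same bit-flip algorithm on a length-$2^n$ word. Your explicit running counter for the Hamming weight is a minor implementation detail the paper leaves implicit.
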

\begin{proof}
Let $P = \pi_n(f)$ and $\mathcal I$ such that $P = \sum_{I \in \mathcal I} X^I$.
\[
\mu_{[n]} (P) = \bigoplus_{I \in \mathcal I} \big( X^I \prod_{j \in [n] \setminus I} (1+X_j) \big).
\]
We conclude by observing that each factor of the sum contains $ 2^{n-|I|} \le 2^{n-val(P)}$ terms. 
\end{proof}
For example, if $val(P) = n/2$ and $M = 2^{n/2}$, we obtain an upper bound of the 
complexity $2^{n/2} \cdot 2^{n/2} = 2^n$ 
which is better than the complexity of butterfly algorithm which is $n 2^{n-1}$.

\begin{proposition}
    \label{prop:XnP}
  Let $P$ be the polynomial form of a Boolean function $f \in \F_{n-1}$. Then M\"obius transform of the polynomial $X_n+P$ with $n$ indeterminates is
  $$
  \mu_{[n]}(X_{n} + P) = \mu_{[n-1]}(P) + X_{n} \mu_{[n-1]}(P+1).
  $$
  Moreover the Boolean function $f'=\pi_n^{-1}(X_n+P)$ is a balanced one, that is:
  $$w_H(f') = 2^{n-1}.$$
\end{proposition}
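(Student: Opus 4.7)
The plan is to establish the formula first by pure algebra with the operator $\mu_{[n]}$, and then derive the Hamming weight either by direct counting of monomials in that formula, or alternatively by the elementary observation that $f'$ is of the form $x_n \oplus f$.

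For the formula, I would exploit the fact that $\mu_{[n]}$ is $\K_2$-linear (since Corollary~\ref{cor:reformulation} says $\mu_{[n]}(P) = P \otimes \prod_{i=1}^n(1+X_i)$ and $\otimes$ is bilinear), so $\mu_{[n]}(X_n + P) = \mu_{[n]}(X_n) + \mu_{[n]}(P)$. The first summand is handled by Proposition~\ref{prop:monomial} with $I = \{n\}$: $\mu_{[n]}(X_n) = X_n \prod_{j=1}^{n-1}(1+X_j) = X_n\, \mu_{[n-1]}(1)$. For the second summand, use commutativity (Proposition~\ref{commutative}) to write $\mu_{[n]}(P) = \mu_{X_n}(\mu_{[n-1]}(P))$; since $X_n$ does not appear in $\mu_{[n-1]}(P)$ (because it did not appear in $P$ and the operators $\mu_{X_i}$ for $i<n$ only introduce indeterminates from $[n-1]$), Definition~\ref{def:mux} gives $\mu_{X_n}(\mu_{[n-1]}(P)) = (1+X_n)\mu_{[n-1]}(P)$. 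Summing and collecting the $X_n$ terms yields
\[
\mu_{[n]}(X_n+P) = \mu_{[n-1]}(P) + X_n\bigl(\mu_{[n-1]}(1) + \mu_{[n-1]}(P)\bigr) = \mu_{[n-1]}(P) + X_n\,\mu_{[n-1]}(P+1),
\]
using linearity of $\mu_{[n-1]}$.

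For the Hamming weight, I would use that $w_H(f')$ equals the number of monomials in $\mu_{[n]}(X_n+P)$, because $T(f') = A(\mu(f'))$ and so the number of ones in $T(f')$ equals the number of monomials of the ANF of $\mu(f')$, i.e.\ of $\mu_{[n]}(X_n+P)$. The two summands $\mu_{[n-1]}(P)$ and $X_n\,\mu_{[n-1]}(P+1)$ have disjoint monomial supports (one contains $X_n$ in every monomial, the other contains none), so the total count is $N(\mu_{[n-1]}(P)) + N(\mu_{[n-1]}(P+1))$. Let $S_0, S_1$ denote their monomial sets, both subsets of the $2^{n-1}$ monomials in $\R_{n-1}$. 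By linearity, $\mu_{[n-1]}(P) + \mu_{[n-1]}(P+1) = \mu_{[n-1]}(1) = \prod_{i=1}^{n-1}(1+X_i)$, whose monomial set is the full collection of $2^{n-1}$ monomials. Since the symmetric difference $S_0 \triangle S_1$ equals this full set, every monomial lies in exactly one of $S_0, S_1$, hence $S_0 \cap S_1 = \emptyset$ and $|S_0|+|S_1|= 2^{n-1}$.

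The potentially delicate point is the first step's use of commutativity together with the claim that $\mu_{[n-1]}(P)$ still contains no $X_n$; this is clear but deserves a sentence. As an alternative cleaner route for the balance, one can simply note that $f'(a_1,\dots,a_n) = a_n \oplus f(a_1,\dots,a_{n-1})$ where $f = \pi_{n-1}^{-1}(P)$, and for each of the $2^{n-1}$ values of $(a_1,\dots,a_{n-1})$ exactly one choice of $a_n$ yields $f' = 1$, giving $w_H(f') = 2^{n-1}$ immediately. I would include the monomial-counting argument since it stays within the paper's polynomial framework and reuses the formula just proved.
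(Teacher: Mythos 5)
Your proposal is correct and follows essentially the same route as the paper: linearity of $\mu_{[n]}$, the identities $\mu_{[n]}(X_n)=X_n\,\mu_{[n-1]}(1)$ and $\mu_{[n]}(P)=(1+X_n)\mu_{[n-1]}(P)$, and then regrouping the $X_n$ terms. For the weight, your support-partition argument is just an unpacking of the complementarity fact $\mu_{[n-1]}(P+1)=\overline{\mu_{[n-1]}(P)}$ that the paper cites from Proposition~\ref{prop:complementary_mobius}, and your extra care in justifying that $X_n$ does not occur in $\mu_{[n-1]}(P)$ is a welcome addition rather than a deviation.
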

\begin{proof}
Let us to develop the computation thanks to Definition~\ref{def:mux}:
 \begin{eqnarray*}
 \mu_{[n]}(X_n+P) & = & \mu_{[n]}(X_n) + \mu_{[n]}(P)\\
 & = & X_n \mu_{[n-1]}(1) + (1+X_n) \mu_{[n-1]}(P)\\
 & = & \mu_{[n-1]}(P) + X_n \mu_{[n-1]}\left(P + 1 \right).\\
 \end{eqnarray*}
 
 Moreover, applying Proposition~\ref{prop:complementary_mobius}:
  $$
  \mu_{[n-1]}(P+1) = \mu_{[n-1]}(P) + \prod_{i=1}^{n-1} (1 + X_i) = \overline{\mu_{[n-1]}(P)};$$
  thus
    \begin{eqnarray*}
    w_H(f') & = & w_H(f) + 2^{n-1} - w_H(f);\\
    & = & 2^{n-1}.
    \end{eqnarray*}
 \end{proof}
 
\begin{proposition}
  \label{prop:monomial_sum}
  The M\"obius transform of the sum of all monomials of degree one is the sum of all monomials of odd degree; that is
  $$
  \mu_{[n]}\left( \sum_{i \in [n]} X_i\right) = \sum_{J \subset [n],\text{ st } |J| \text{ is odd}} X^J.
  $$
  Thus $w_H\left(\pi_n^{-1}\left(\sum_{i \in [n]}X_i\right)\right) = 2^{n-1}$.
\end{proposition}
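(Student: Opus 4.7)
The plan is to exploit the linearity of $\mu_{[n]}$ (which is immediate from Definition~\ref{def:mux}, since each $\mu_{X_i}$ is $\mathbb{F}_2$-linear, and the composition is linear too) to reduce the computation to the single-monomial case already handled by Proposition~\ref{prop:monomial}. Concretely, I would write
\[
\mu_{[n]}\!\left(\sum_{i \in [n]} X_i\right) = \sum_{i \in [n]} \mu_{[n]}(X_i) = \sum_{i \in [n]} X_i \prod_{j \in [n] \setminus \{i\}} (1 + X_j).
\]

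Next I would expand each product as a sum over subsets of $[n]\setminus\{i\}$, giving
\[
X_i \prod_{j \in [n]\setminus\{i\}} (1 + X_j) = \sum_{K \subset [n]\setminus\{i\}} X^{K \cup \{i\}}.
\]
Summing over $i$ and reindexing by $J = K \cup \{i\}$, each nonempty $J \subset [n]$ is produced exactly once for every choice of a distinguished element $i \in J$, so the coefficient of $X^J$ equals $|J| \bmod 2$. The surviving monomials are therefore exactly those $X^J$ with $|J|$ odd, which is the claimed identity.

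For the Hamming weight, I would invoke Proposition~\ref{prop:commutative_diagram}: the coefficients of $\mu_{[n]}(P)$ are the truth-table entries of $\pi_n^{-1}(P)$, so $w_H$ equals the number of monomials in the M\"obius transform. By the identity above, this number is the count of odd-cardinality subsets of $[n]$, which is $2^{n-1}$. Alternatively, one could apply Proposition~\ref{prop:XnP} with the polynomial $P' = X_1 + \dots + X_{n-1}$ viewed as an element of $\F_{n-1}$, which already yields balancedness directly.

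I expect no serious obstacle: the only subtle step is recognising that the coefficient of $X^J$ after expansion is exactly $|J|$ (not $1$), so that reduction modulo $2$ selects the odd subsets. A small sanity check on $n=2$ or $n=3$ confirms this before writing the general argument.
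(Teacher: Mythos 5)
Your proof is correct and follows essentially the same route as the paper: linearity of $\mu_{[n]}$, then Proposition~\ref{prop:monomial} applied to each $X_i$, then expansion and the observation that the coefficient of $X^J$ is $|J| \bmod 2$. The paper states the final equality without the counting detail you supply, so your version is simply a more explicit write-up of the same argument.
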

\begin{proof}
\begin{eqnarray*}
\mu_{[n]} \left(\sum_{i \in [n]} X_i\right) & = & \sum_{i \in [n]} \mu_{[n]}(X_i)\\
                         & = & \sum_{i \in [n]} X_i \prod_{j \in [n] \setminus \{ i \}} (1 + X_j)\\
                         & = & \sum_{J \subset [n], |J| \text{ is odd}} X^J.\\
\end{eqnarray*}  
\end{proof}

\begin{remark}
  Let $f = \bigoplus_{i=1}^n x^i \in \F_n$ be the Boolean function which is the sum of all monomials of degree 1. 
Since $w_H(f) = N(\mu_{[n]}(\sum_{i \in [n]} X_i))$, Proposition~\ref{prop:monomial_sum} provides an alternative 
proof that $f$ is a balanced Boolean function.
\end{remark}

The following Proposition shows that we may improve the complexity by a factorization.
\begin{proposition}
  \label{prop:mu_factorization}
  Let $I \subset [n], J \subset [n]$ be two subsets such that $I \cap J = \emptyset$ and $n_1 = |I|$. Let $P \in \R_n$ be a polynomial such that $P = X^I \; \left(\sum_{j \in J} X_j\right)$. Then
  $$
  \mu_{[n]}\left(P \right)= \left( \sum_{I \subset L \subset [n] \setminus J}X^L\right) \left( \sum_{K \subset J,|K| odd} X^K\right);
  $$
    and $w_H\left( \pi_n^{-1}(P)\right) = 2^{n-n_1-1}$.
\end{proposition}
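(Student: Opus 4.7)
The plan is to reduce the claim to Proposition~\ref{prop:monomial} (the formula for $\mu_{[n]}(X^I)$) and Proposition~\ref{prop:monomial_sum} (the formula for the M\"obius transform of the sum of degree one monomials), combined with a factorization step.

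First, I would use that $\mu_{[n]}$ is $\K_2$-linear to write
\[
\mu_{[n]}(P) = \sum_{j \in J} \mu_{[n]}\bigl(X^I X_j\bigr) = \sum_{j \in J} \mu_{[n]}\bigl(X^{I \cup \{j\}}\bigr),
\]
using that $I \cap J = \emptyset$ ensures $j \notin I$ and hence $X^I X_j = X^{I\cup\{j\}}$. Applying Proposition~\ref{prop:monomial} yields
\[
\mu_{[n]}(P) = \sum_{j \in J} X^{I \cup \{j\}} \prod_{k \in [n]\setminus(I\cup\{j\})}(1+X_k).
\]

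Next I would factor out $X^I$ and the product over the indices that are outside $I \cup J$. Writing $[n]\setminus I = ([n]\setminus(I\cup J)) \sqcup J$ (using $I\cap J=\emptyset$), we get for each $j\in J$
\[
\prod_{k \in [n]\setminus(I\cup\{j\})}(1+X_k) \;=\; \Bigl(\prod_{k\in[n]\setminus(I\cup J)}(1+X_k)\Bigr)\cdot \prod_{k\in J\setminus\{j\}}(1+X_k),
\]
so that
\[
\mu_{[n]}(P) \;=\; X^I \prod_{k\in[n]\setminus(I\cup J)}(1+X_k) \;\cdot\; \sum_{j\in J} X_j\prod_{k\in J\setminus\{j\}}(1+X_k).
\]
The first factor is exactly $\sum_{I\subset L\subset[n]\setminus J} X^L$ (expand the product and notice that each term corresponds to a choice $L\setminus I \subset [n]\setminus(I\cup J)$). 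For the second factor, I would apply Proposition~\ref{prop:monomial_sum} on the index set $J$ (which is the same identity just with $[n]$ replaced by $J$; the proof there goes through verbatim since only the support of the monomials matters) to conclude
\[
\sum_{j\in J} X_j\prod_{k\in J\setminus\{j\}}(1+X_k) \;=\; \sum_{K\subset J,\, |K|\text{ odd}} X^K.
\]
Substituting gives the desired factored formula.

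Finally, for the Hamming weight, since the two factors involve disjoint sets of indeterminates ($[n]\setminus J$ versus $J$), their product has no monomial collisions and the number of monomials is the product of the two cardinalities: $\#\{L : I\subset L\subset [n]\setminus J\} = 2^{n-n_1-|J|}$ and $\#\{K\subset J : |K|\text{ odd}\} = 2^{|J|-1}$, giving $2^{n-n_1-1}$. The only mild obstacle is the bookkeeping when re-indexing the products after splitting $[n]\setminus I$ into $J$ and its complement, but this is routine once one records $I\cap J=\emptyset$ carefully.
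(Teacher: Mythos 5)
Your proof is correct, but it takes the computation in the opposite order from the paper. The paper's proof never expands the sum: it first pulls the monomial factor through the operator, writing $\mu_{[n]}(P) = X^I\,\mu_{[n]\setminus I}\bigl(\sum_{j\in J}X_j\bigr)$, then splits $\mu_{[n]\setminus I}$ over the disjoint index sets $[n]\setminus(I\cup J)$ and $J$ (the first contributing the product $\prod_{k\in[n]\setminus(I\cup J)}(1+X_k)$ since those indeterminates do not occur, the second handled by citing Proposition~\ref{prop:monomial_sum} directly). You instead expand by $\K_2$-linearity into the terms $\mu_{[n]}(X^{I\cup\{j\}})$, apply Proposition~\ref{prop:monomial} termwise, and then refactor, which forces you to re-derive the odd-cardinality identity on the index set $J$ rather than quote it. What your route buys is that it leans only on linearity and the already-proved single-monomial formula, avoiding the (implicitly used, never isolated as a lemma) commutation property $\mu_{[n]}(X^I Q) = X^I\,\mu_{[n]\setminus I}(Q)$ on which the paper's first line rests; what the paper's route buys is brevity and a cleaner structural view of why the answer factors. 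Your counting argument for the Hamming weight — disjoint supports of the two factors, hence no collisions, hence $2^{n-n_1-|J|}\cdot 2^{|J|-1}$ monomials — is the same as the paper's and is stated slightly more carefully, since you make explicit why the product of the two monomial counts is the monomial count of the product.
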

\begin{proof}
  Let $n_2 = |J|$, it follows
\begin{eqnarray*}
  \mu_{[n]}(P) & = & X^I \mu_{[n] \setminus I}\left( \sum_{j\in J} X_j\right),\\
  & = & X^I \; \prod_{k \in [n] \setminus (I \cup J)} (1 + X_k)\ \mu_{J} \left(\sum_{j \in J} X_j\right)\\
  & = & \left( \sum_{I \subset L \subset [n] \setminus J}X^L\right) \left( \sum_{K \subset J,|K| odd} X^K\right).
\end{eqnarray*}
Since $\prod_{k \in [n] \setminus (I \cup J)} (1 + X_k)$ gives $2^{n-n_1-n_2}$ terms and 
$2^{n_2-1}$ subsets of $J$ has a odd cardinality, from Proposition~\ref{prop:monomial_sum}, then the statement is hold.
\end{proof}
\begin{example}
  \label{ex:mu_factorization}
  Let $P = X_1X_2 (X_4+X_5)$ be a polynomial form of a Boolean function with five variables, with calculus made in the previous proof, we directly deduce:
\begin{eqnarray*}
  \mu_{[5]} \left(P\right) & = & X_1 X_2 \times \left(1 + X_3 \right) \times (X_4+ X_5)\\
 & = & X_1X_2X_4 + X_1X_2X_5 + X_1X_2X_3X_4 + X_1X_2X_3X_5.
\end{eqnarray*}

Thus, we can check on this example that $w_H(\pi_5^{-1}(P)) = 4 = 2^{5-2-1}$.
\end{example}

\begin{proposition}
  \label{prop:mu_factorizationPlus}
Let $I_1$ and $I_2 \subset [n], J \subset [n]$ be two subsets such that $I_1 \cap I_2 =  I_1 \cap J = I_2 \cap J = \emptyset$, $|I_1| = n_1$ and $I_2 = n_2$. Let $P \in \R_n$ be a polynomial such that $P = (X^{I_1} + X^{I_2}) \; \left(\sum_{j \in J} X_j\right)$. Then its M\"obius transform is
{\footnotesize{
$$
\left( \sum_{K\subset J, |K|\  odd} X^K\right)
\left(\prod_{k\in [n]\setminus(I_1 \cup I_2 \cup J)} (1+X_k)\right)
\left( X^{I_1} \prod_{k\in I_2}(1+X_k) + X^{I_2} \prod_{k\in I_1}(1+X_k)\right),
$$}}
and $w_H\left( \pi_n^{-1}(P)\right) = 2^{n-n_1-1}+2^{n-n_2-1}-2^{n-(n_1+n_2)}$.
\end{proposition}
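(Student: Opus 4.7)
The plan is to reduce to Proposition~\ref{prop:mu_factorization} by linearity of $\mu_{[n]}$. First I would decompose $P = P_1 + P_2$, where $P_\ell = X^{I_\ell}\bigl(\sum_{j \in J} X_j\bigr)$ for $\ell \in \{1,2\}$. Since $\mu_{[n]}$ is additive over $\K_2$, we have $\mu_{[n]}(P) = \mu_{[n]}(P_1) + \mu_{[n]}(P_2)$, and applying Proposition~\ref{prop:mu_factorization} to each summand yields
$$
\mu_{[n]}(P_\ell) \;=\; X^{I_\ell}\prod_{k \in [n]\setminus(I_\ell \cup J)}(1+X_k)\ \cdot\ \sum_{K \subset J,\ |K|\ \text{odd}} X^K.
$$

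Next I would extract the common factor on both sides. Thanks to $I_1 \cap I_2 = \emptyset$, the index set $[n]\setminus(I_\ell \cup J)$ splits as the disjoint union of $[n]\setminus(I_1 \cup I_2 \cup J)$ and $I_{3-\ell}$, hence
$$
\prod_{k \in [n]\setminus(I_\ell \cup J)}(1+X_k) \;=\; \Bigl(\prod_{k \in [n]\setminus(I_1 \cup I_2 \cup J)}(1+X_k)\Bigr)\Bigl(\prod_{k \in I_{3-\ell}}(1+X_k)\Bigr).
$$
Substituting this into the previous display for $\ell = 1, 2$ and pulling out the common factor $\bigl(\sum_K X^K\bigr)\cdot\prod_{k \in [n]\setminus(I_1 \cup I_2 \cup J)}(1+X_k)$ gives exactly the claimed factorization.

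For the Hamming weight I would invoke the identity $w_H(f_1 \oplus f_2) = w_H(f_1) + w_H(f_2) - 2\,w_H(f_1 \wedge f_2)$ on $f_\ell = \pi_n^{-1}(P_\ell)$. Proposition~\ref{prop:mu_factorization} gives $w_H(f_\ell) = 2^{n-n_\ell-1}$. To count $w_H(f_1 \wedge f_2)$, observe that $(f_1 \wedge f_2)(a) = 1$ requires simultaneously $a_k = 1$ for every $k \in I_1 \cup I_2$ and odd parity $\bigoplus_{j \in J} a_j = 1$; because $J$ is disjoint from $I_1 \cup I_2$ and the remaining coordinates are free, this gives $2^{n-n_1-n_2-1}$ such inputs. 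Combining the three quantities yields the stated weight.

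The main obstacle I anticipate is the bookkeeping in the factoring step, where one must carefully track which $(1+X_k)$ factors migrate from each branch into the shared product; this is a mild combinatorial check rather than a substantive difficulty. The linearity reduction and the overlap computation for $f_1 \wedge f_2$ are routine once one uses the disjointness hypotheses to separate the constraints on the coordinates.
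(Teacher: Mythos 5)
Your proof is correct, and its first half follows the paper's own route: both decompose $P = P_1 + P_2$ by linearity and apply Proposition~\ref{prop:mu_factorization} to each summand. Your factoring step, using $[n]\setminus(I_\ell\cup J) = \bigl([n]\setminus(I_1\cup I_2\cup J)\bigr) \sqcup I_{3-\ell}$ to pull out the common product, is in fact slightly more complete than the paper's argument, which stops at a sum-over-$L$ form with the mutual terms struck out and never explicitly reassembles the product displayed in the statement. Where you genuinely diverge is the weight count. The paper stays on the monomial side: it identifies the monomials common to the two expansions (those $X^L X^K$ with $I_1\cup I_2\subset L\subset[n]\setminus J$ and $K\subset J$ of odd cardinality), counts $2^{n-|J|-n_1-n_2}\cdot 2^{|J|-1} = 2^{n-n_1-n_2-1}$ of them, and subtracts each twice since they cancel over $\K_2$. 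You instead pass to the truth-table side via $w_H(f_1\oplus f_2)=w_H(f_1)+w_H(f_2)-2\,w_H(f_1\wedge f_2)$ and compute $w_H(f_1\wedge f_2)=2^{n-n_1-n_2-1}$ by counting inputs that are $1$ on $I_1\cup I_2$ and of odd parity on $J$. These are the same count viewed through the correspondence $A(\mu(f))=T(f)$, but your version does not depend on the explicit monomial expansion of $\mu_{[n]}(P_\ell)$, which makes the disjointness hypotheses do their work more transparently; both versions extend to $k$ summands by inclusion/exclusion, as the paper's subsequent remark indicates.
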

\begin{proof}
By Proposition~\ref{prop:mu_factorization}
\begin{eqnarray*}
 \mu_{[n]}(P) & = &\left( \sum_{I_1 \subset L \subset [n] \setminus J}X^L+
\sum_{I_2 \subset L \subset [n] \setminus J}X^L\right) \left( \sum_{K \subset J,|K| odd} X^K\right)\\
& =  & \left( \sum_{I_1 \subset L \subset [n] \setminus J,I_2\nsubseteq I_2}X^L+
\sum_{I_2 \subset L \subset [n] \setminus J, I_1 \nsubseteq I_2}X^L\right) \left( \sum_{K \subset J,|K| odd} X^K\right).
\end{eqnarray*}
Since mutual terms $X^L$ satisfy $I_1 \cup I_2 \subset L \subset [n] \setminus J$. 
$L = (I_1 \cup I_2) \cup L'$, where $L' \subset [n] \setminus (I_1 \cup I_2 \cup J)$. 
Hence we have $2^{n-|J|-n_1-n_2}$ such $L$ subsets. $\{K \subset J,|K| odd\}$ contains $2^{|J|-1}$ 
subsets. 
Therefore, since each mutual term is remove twice, we have 
$2\cdot 2^{n-|J|-n_1-n_2}\cdot 2^{|J|-1} = 2^{n-(n_1+n_2)}$ terms to remove. 
\end{proof}
\begin{remark}
We may generalize this proposition with $k$ subsets $I_1$, \ldots, $I_k$ by using the inclusion/exclusion principle. 
\end{remark}

We can easily see that the Boolean functions defined as Proposition~\ref{prop:mu_factorization} has an even Hamming weight. Moreover, we can notice that the size of second subset $J$ does not act in the Hamming weight.
\begin{example}[Example~\ref{ex:mu_factorization} continued]
  Let $Q = X_1X_2 (X_3+X_4+X_5)$ be a polynomial form of a Boolean function with five variables, we have:
\begin{eqnarray*}
  \mu_{[5]} \left(Q\right) & = & X_1 X_2 \times (X_3+X_4+ X_5 + X_3X_4X_5)\\
  & = & X_1X_2X_3 + X_1X_2X_4 + X_1X_2X_5 + X_1X_2X_3X_4X_5.
\end{eqnarray*}
Thus $w_H(\pi_5^{-1}(Q)) = w_H(\pi_5^{-1}(P)) = 4$.
\end{example}

Another important remark is that the M\"obius transform of indeterminate on set $J$ produces only monomials with odd degree. Thus we can generalize the previous result to the following proposition.
\begin{proposition}
  \label{prop:combination_factorization}
  Let $I,J, I', J' \subset [n]$ be four subsets such that $I \cap J = \emptyset = I' \cap J'$, $I\cup J = [n] = I'\cup J'$ and $n_1 = |I|,\ n_1'= |I'|$, moreover $n_1$ and $n_1'$ has not the same parity. Let $P,P' \in \R_n$ be two polynomials such that $P = X^I \; \left(\sum_{j \in J} X_j\right)$ and $P' = X^{I'} \; \left(\sum_{j \in J'} X_j\right)$. Then
$$
\mu_{[n]}(P+P') = \mu_{[n]}(P) + \mu_{[n]}(P'),
$$
and 
$$
w_H\left( \pi_n^{-1}(P+P')\right) = 2^{n-n_1-1} + 2^{n-n_1'-1}.
$$
\end{proposition}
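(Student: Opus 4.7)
My plan is to split the proof into two parts. The identity $\mu_{[n]}(P+P') = \mu_{[n]}(P) + \mu_{[n]}(P')$ is immediate: by Corollary~\ref{cor:reformulation}, $\mu_{[n]}$ is nothing but multiplication by the fixed polynomial $\prod_{i=1}^n (1+X_i)$ in the ring $\R_n$, hence it is $\K_2$-linear and distributes over $+$. The real content is the Hamming weight equality, which I will obtain by showing that $\mu_{[n]}(P)$ and $\mu_{[n]}(P')$ have disjoint monomial support, and then invoking Proposition~\ref{prop:mu_factorization} to count the monomials in each.

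For the support analysis, note that since $I \cup J = [n]$ and $I \cap J = \emptyset$ we have $[n]\setminus J = I$. Consequently, the outer factor appearing in Proposition~\ref{prop:mu_factorization} collapses to a single term $X^I$, giving
\[
\mu_{[n]}(P) \;=\; \sum_{\substack{K \subset J \\ |K|\ \mathrm{odd}}} X^{I \cup K}.
\]
Every monomial occurring in $\mu_{[n]}(P)$ thus has degree $n_1 + |K|$ with $|K|$ odd, so its total degree is congruent to $n_1 + 1 \pmod 2$. The same reasoning applied to $P'$ shows that every monomial in $\mu_{[n]}(P')$ has degree congruent to $n_1' + 1 \pmod 2$.

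Now the hypothesis that $n_1$ and $n_1'$ have different parities forces $n_1 + 1$ and $n_1' + 1$ to also have different parities. Hence the supports of $\mu_{[n]}(P)$ and $\mu_{[n]}(P')$ are disjoint (they live on disjoint parity classes of the degree), so
\[
N\bigl(\mu_{[n]}(P+P')\bigr) \;=\; N\bigl(\mu_{[n]}(P)\bigr) + N\bigl(\mu_{[n]}(P')\bigr).
\]
Since the Hamming weight of $\pi_n^{-1}(Q)$ equals the number of monomials in $\mu_{[n]}(Q)$ (by Proposition~\ref{prop:commutative_diagram}, combined with the fact that the number of monomials of a polynomial form equals the Hamming weight of its associated truth-table), substituting the two weights $2^{n-n_1-1}$ and $2^{n-n_1'-1}$ provided by Proposition~\ref{prop:mu_factorization} yields the announced formula.

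No step is truly hard here; the only subtlety is the parity bookkeeping that guarantees disjoint supports. I would therefore devote most of the exposition to making that parity argument transparent, since it is precisely the reason the hypothesis on $n_1$ and $n_1'$ was imposed; without it, the two Möbius transforms could share monomials and one would have to subtract the overlap by an inclusion/exclusion argument, as in Proposition~\ref{prop:mu_factorizationPlus}.
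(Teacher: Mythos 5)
Your proof is correct and follows essentially the same route as the paper: the paper's own (very terse) argument is precisely that the differing parities of $n_1$ and $n_1'$, together with $[n]\setminus(I\cup J)=\emptyset=[n]\setminus(I'\cup J')$, rule out any common monomial between $\mu_{[n]}(P)$ and $\mu_{[n]}(P')$, so no cancellation occurs and the monomial counts from Proposition~\ref{prop:mu_factorization} simply add. Your version merely makes the parity bookkeeping explicit (every monomial of $\mu_{[n]}(P)$ has degree $\equiv n_1+1 \pmod 2$), which is a welcome clarification but not a different proof.
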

\begin{proof}
  Since $n_1$ and $n_1'$ has different parity and $[n]\setminus (I \cup J) = \emptyset = [n]\setminus (I' \cup J')$, 
we can't have equal monomials in $\mu_{[n]}(P)$ and $\mu_{[n]}(P')$; then it could not have some vanishing. 
Thus the statement is hold.
\end{proof}

\begin{example}
  Let $f \in \F_5$ be a Boolean function such that its polynomial form is defined as:
\[
\begin{array}{lcl}
Q & = & X_1 X_2 X_3 X_4 + X_1 X_2 X_3 X_5 + X_2 X_4 X_1 + X_2 X_4 X_3 + X_2 X_4 X_5\\
  & = & \underbrace{X_1 X_2 X_3 ( X_4 + X_5)}_{=P} + \underbrace{X_2 X_4 (X_1 + X_3 + X_5)}_{=P'}.\\
\mu_{[n]}(Q) & = &  X_1 X_2 X_3 ( X_4 + X_5) + X_2 X_4 (X_1 + X_3 + X_5 + X_1 X_3 X_5).
\end{array}
\]

Thus $w_H(\pi_5^{-1}(Q)) = 6 = \underbrace{2^{5-3-1}}_{=w_H(\pi_5^{-1}(P))} + \underbrace{2^{5-2-1}}_{=w_H(\pi_5^{-1}(P'))} = 2 + 4$.
\end{example}

We propose another generalization of the Proposition~\ref{prop:mu_factorization}.
\begin{proposition}
    \label{prop:mu_factorization2}
  Let $I, J, K \subset [n]$ be three subsets of $[n]$ such that $I, J, K$ is a partition of $[n]$, then 
  $$
    \mu_{[n]}\left(X^I . \left( X^J + X^K\right)\right) = X^I \left( X^J \prod_{k \in K}\left(1+X^k\right) + X^K \prod_{j \in J}\left(1+X^j\right)\right).
  $$
  Moreover, the Hamming weight of this associated Boolean function is $2^{|J|} + 2^{|K|}$.
\end{proposition}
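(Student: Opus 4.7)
My plan is to reduce the identity to the single-monomial formula of Proposition~\ref{prop:monomial} and then count monomials to get $w_H$; the main subtlety will be tracking $\K_2$ cancellations between the two expanded products.

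First, I would exploit the fact that $I$, $J$, $K$ are pairwise disjoint to write $X^I\cdot X^J = X^{I\cup J}$ and $X^I\cdot X^K = X^{I\cup K}$ in $\R_n$, so that $P = X^{I\cup J}+X^{I\cup K}$. Appealing to the $\K_2$-linearity of $\mu_{[n]}$, it then suffices to apply Proposition~\ref{prop:monomial}, which gives $\mu_{[n]}(X^L) = X^L\prod_{\ell\in [n]\setminus L}(1+X_\ell)$, to each of the two monomial terms. The partition hypothesis turns the complements into $[n]\setminus(I\cup J) = K$ and $[n]\setminus(I\cup K) = J$, yielding
$$
\mu_{[n]}(X^{I\cup J}) = X^I X^J\prod_{k\in K}(1+X_k), \qquad \mu_{[n]}(X^{I\cup K}) = X^I X^K\prod_{j\in J}(1+X_j).
$$
Summing these and factoring $X^I$ out of the result will yield the stated closed form for $\mu_{[n]}(P)$.

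For the Hamming weight, Proposition~\ref{prop:commutative_diagram} together with the definitions of $T$ and $A$ tell us that $w_H(\pi_n^{-1}(P))$ equals the number of monomials surviving in $\mu_{[n]}(P)$. Expanding independently, $X^J\prod_{k\in K}(1+X_k) = \sum_{K'\subset K}X^{J\cup K'}$ contributes $2^{|K|}$ monomials of the form $X^{I\cup J\cup K'}$, while $X^K\prod_{j\in J}(1+X_j) = \sum_{J'\subset J}X^{K\cup J'}$ contributes $2^{|J|}$ monomials of the form $X^{I\cup K\cup J'}$. The single combinatorial obstacle is to enumerate collisions between the two families: a coincidence $I\cup J\cup K' = I\cup K\cup J'$ inside the disjoint union $J\sqcup K$ forces $J'=J$ and $K'=K$, so the only monomial common to both is the top-degree $X^{[n]}$, which cancels over $\K_2$. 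The count of surviving monomials is therefore $2^{|J|}+2^{|K|}-2$ (analogously to the explicit overlap correction in Proposition~\ref{prop:mu_factorizationPlus}).
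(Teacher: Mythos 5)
Your derivation of the closed form is correct and in substance the same as the paper's: the paper first pulls $X^I$ out via $\mu_{[n]}(X^I Q)=X^I\,\mu_{[n]\setminus I}(Q)$ and then applies linearity to $X^J+X^K$, whereas you first merge $X^I$ into each monomial and invoke the single-monomial formula $\mu_{[n]}(X^L)=X^L\prod_{\ell\in[n]\setminus L}(1+X_\ell)$ of Proposition~\ref{prop:monomial}. The two computations are interchangeable, so there is nothing to choose between them on this part.

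The Hamming weight is where the comparison is genuinely informative, because the paper's proof does not address it at all, while you actually carry out the count --- and your count is right, but it contradicts the statement. The two families $\lbrace X^{I\cup J\cup K'}\,:\,K'\subset K\rbrace$ and $\lbrace X^{I\cup K\cup J'}\,:\,J'\subset J\rbrace$ are each internally collision-free, and (as you argue, using disjointness) they intersect in exactly the top monomial $X^{[n]}$, which therefore cancels over $\K_2$. Hence the number of surviving monomials, i.e.\ $w_H(\pi_n^{-1}(P))$, is $2^{|J|}+2^{|K|}-2$, not the claimed $2^{|J|}+2^{|K|}$. A sanity check confirms this: for $n=2$, $I=\emptyset$, $J=\{1\}$, $K=\{2\}$, the function is $x_1\oplus x_2$, of weight $2$, whereas the proposition would give $4$. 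So the statement as printed is off by $2$ (consistent with the inclusion--exclusion correction appearing in Proposition~\ref{prop:mu_factorizationPlus}). The only defect of your write-up is that you silently prove a weight different from the one you were asked to prove: you should state explicitly that the proposition's weight formula requires the correction term $-2$, rather than leaving the reader to notice the mismatch.
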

\begin{proof}
  \begin{eqnarray*}
    \mu_{[n]}\left(X^I . \left( X^J + X^K\right)\right) & = & X^I . \mu_{[n]\setminus I}\left( X^J + X^K\right);\\
    & = & X^I \left( \mu_{[n]\setminus I} \left( X^J\right) + \mu_{[n]\setminus I} \left( X^K\right)\right);\\
    & = & X^I \left( X^J \prod_{k \in K}\left(1+X^k\right) + X^K \prod_{j \in J}\left(1+X^j\right)\right).
  \end{eqnarray*}
\end{proof}
The first consequence of the last proposition, we are able to design balanced Boolean functions directly. Moreover, another direct consequence is that the Hamming weight of a Boolean function does not depend of its degree, but here only of the degree of its factorization.

Finally, we conclude this part with a generalization of the previous proposition.
\begin{proposition}
  \label{prop:mu_factorization3}
  Let $I, J, K \subset [n]$ be three subsets of $[n]$ such that 
  $
  I \cap J = I \cap K = J \cap K = \emptyset.
  $
  We denote $L = I \cup J \cup K$, then $\mu_{[n]}\left(X^I . \left( X^J + X^K\right)\right)$ is
  $$
     \prod_{\ell \in [n]\setminus L} \left(1 + X^\ell \right) X^I \left( X^J \prod_{k \in K}\left(1+X^k\right) + X^K \prod_{j \in J}\left(1+X^j\right)\right).
  $$

  Moreover, the Hamming weight of this Boolean function is $2^{n-|L|} \left(2^{|J|} + 2^{|K|}\right)$.
\end{proposition}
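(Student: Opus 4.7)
The plan is to reduce the claim to Proposition~\ref{prop:mu_factorization2} by separating the indeterminates that actually appear in $X^I(X^J+X^K)$ (namely those in $L=I\cup J\cup K$) from the ``inert'' indeterminates in $[n]\setminus L$. By Proposition~\ref{commutative} and the notation $\mu_N$, the operator $\mu_{[n]}$ factors as $\mu_{[n]\setminus L}\circ\mu_L$, so the strategy is to evaluate $\mu_L$ first and then apply $\mu_{[n]\setminus L}$.

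First I would invoke Proposition~\ref{prop:mu_factorization2}, restricting the ambient index set to $L$ (in which $I,J,K$ form a partition by hypothesis). This gives immediately
$$
\mu_L\bigl(X^I(X^J+X^K)\bigr) \;=\; X^I\left(X^J\prod_{k\in K}(1+X_k) + X^K\prod_{j\in J}(1+X_j)\right),
$$
call this polynomial $R$. Crucially, every indeterminate appearing in $R$ lies in $L$, so for each $\ell\in[n]\setminus L$ the polynomial $R$ has no monomial containing $X_\ell$; the last clause of Definition~\ref{def:mux} then yields $\mu_{X_\ell}(R)=(1+X_\ell)R$. Iterating over $[n]\setminus L$ (using commutativity again to justify any order) produces
$$
\mu_{[n]\setminus L}(R) \;=\; R\cdot\prod_{\ell\in[n]\setminus L}(1+X_\ell),
$$
which is exactly the announced formula.

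For the Hamming weight, note that the two summands in $R$ are supported on disjoint sets of monomials: one contains $X^J$ as a factor and the other $X^K$, and $J\cap K=\emptyset$ forces $X^J\ne X^K$ as monomials, while the remaining factors involve only indeterminates in $K$ or $J$ respectively. Hence the number of monomials of $R$ is $2^{|K|}+2^{|J|}$, as already observed in Proposition~\ref{prop:mu_factorization2}. Multiplying by $\prod_{\ell\in[n]\setminus L}(1+X_\ell)$ expands each monomial of $R$ into $2^{n-|L|}$ distinct monomials, all of which remain pairwise distinct because the new indeterminates $X_\ell$ do not occur in $R$. Thus the total number of monomials, which equals $w_H(\pi_n^{-1}(\mu_{[n]}(\cdot)))$ via Proposition~\ref{prop:commutative_diagram}, is $2^{n-|L|}(2^{|J|}+2^{|K|})$.

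The main (and essentially only) subtle point is the disjointness argument guaranteeing that no cancellation occurs, either between the two summands of $R$ or after multiplication by the ``inert'' factors; this is handled cleanly by the hypothesis that $I,J,K$ are pairwise disjoint and that $[n]\setminus L$ is disjoint from $L$. There are no further computations to grind through beyond applying Proposition~\ref{prop:mu_factorization2} and the observation that $\mu_{X_\ell}$ acts on a polynomial not containing $X_\ell$ by multiplication by $(1+X_\ell)$.
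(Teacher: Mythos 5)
The paper states this proposition without any proof, so there is no in-paper argument to compare yours against. Judged on its own, your derivation of the M\"obius transform formula is correct and is surely the intended one: factor $\mu_{[n]}=\mu_{[n]\setminus L}\circ\mu_L$ by Proposition~\ref{commutative}, apply Proposition~\ref{prop:mu_factorization2} with ambient index set $L$ (where $I,J,K$ do form a partition), and then use the clause of Definition~\ref{def:mux} giving $\mu_{X_\ell}(R)=(1+X_\ell)R$ whenever $X_\ell$ does not occur in $R$. No objection there.

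The Hamming weight part, however, contains a genuine gap: the disjointness claim is false. Writing
$$
R \;=\; X^IX^J\prod_{k\in K}(1+X_k)\;+\;X^IX^K\prod_{j\in J}(1+X_j)
\;=\;\sum_{K'\subset K}X^{I\cup J\cup K'}\;+\;\sum_{J'\subset J}X^{I\cup K\cup J'},
$$
the choice $K'=K$ in the first sum and $J'=J$ in the second produce the \emph{same} monomial $X^{I\cup J\cup K}=X^L$, which therefore cancels over $\K_2$; the fact that the first summand ``contains $X^J$ as a factor'' does not prevent its monomials from also containing $X^K$. Consequently $R$ has $2^{|J|}+2^{|K|}-2$ monomials (for $(J,K)\neq(\emptyset,\emptyset)$), and after multiplying by the inert factors the weight is $2^{n-|L|}\left(2^{|J|}+2^{|K|}-2\right)$, not $2^{n-|L|}\left(2^{|J|}+2^{|K|}\right)$. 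The proposition as printed is in fact wrong on this point: for $n=2$, $I=\emptyset$, $J=\{1\}$, $K=\{2\}$ the function is $x_1\oplus x_2$, whose weight is $2$, whereas the stated formula gives $4$. Note that Proposition~\ref{prop:mu_factorizationPlus}, which treats the analogous sum of two monomials, does carry the inclusion--exclusion correction $-2^{n-(n_1+n_2)}$; the same correction is needed here (and already in Proposition~\ref{prop:mu_factorization2}, on which you rely for the count). A smaller slip: the quantity you are counting is $w_H(\pi_n^{-1}(P))=N(\mu_{[n]}(P))$, not $w_H(\pi_n^{-1}(\mu_{[n]}(P)))$; the latter equals the number of monomials of $P$ itself.
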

All propositions in this section allows us to give directly the M\"obius transform and the Hamming weight of particular Boolean functions. Other similar propositions could be useful, we introduce the previous ones which seem to be the most helpful.
We have few chances to exploit these propositions for a random Boolean function. However, most of Boolean functions used in practice are not random but design by specific constructions. The following example detail the Boolean function into the design of Achertbahn 128. 

\begin{example}
  Achterbahn 128 is a a synchronous stream cipher algorithm developed by Berndt Gammel, Rainer G\"ottfert and Oliver Kniffler\cite{achterbahn1,achterbahn2}. It involves a Boolean function $f_A$ with $13$ variables, which has good cryptographic properties: balanced, its algebraic degree is $4$, correlation immunity of order $8$, 
nonlinearity $3584$ and algebraic immunity $4$.
The polynomial form of $f_A$ may be written with the following factorization
\[
\begin{array}{l}
X_0+X_1+X_2+X_3+X_4+X_5+X_6+X_7+X_8+X_9+X_{10}+X_{11}+X_{12}+\\
X_0X_5 + X_2(X_{10}X_{11}) + X_6(X_5+X_8+X_{10}+X_{11}+X_{12}) + X_8(X_4+X_7+X_9+X_{10}) +\\ 
X_9(X_{10}+X_{11}+X_{12}) + X_{10}X_{12} + X_{12}X_4 + X_0X_5(X_8+X_{10}+X_{11}+X_{12}) + \\
X_1X_2(X_8+X_{12}) + X_1X_4(X_{10}+X_{11}) + X_1X_9(X_8+X_{10}+X_{11}) +\\
X_2X_4(X_8+X_{10}+X_{11}+X_{12}) + X_2X_7(X_8+X_{12}) + X_2X_8(X_3+\\
X_7+X_{10}+X_{11})+X_3X_8(X_4+X_9) + X_4X_7(X_8+X_{12}) + X_4X_8X_9 + \\
X_4X_{12}(X_3+X_9) + X_5X_6(X_8+X_{10}+X_{11}+X_{12}) +\\
(X_1X_2X_3+X_4X_7X_9)(X_8+X_{12})+(X_1X_2X_7+X_3X_4X_8)(X_8+X_{12})+\\
X_1X_3X_5+X_2X_4X_7)(X_8+X_{12}) + (X_1X_3X_8+X_2X_5X_7)(X_8+X_{12})+\\
(X_1X_7X_9+X_2X_5X_7)(X_8+X_{12}) + (X_1X_5X_7+X_2X_3X_4)(X_8+X_{12})+\\
X_6X_8(X_{10}+X_{11}) + X_6X_{12}(X_{10}X_{11}) + X_8X_9(X_7+X_{10}+X_{11})+\\ 
(X_0X_5X_8+X_1X_4X_{12})(X_{10}+X_{11}) + (X_0X_5X_{12}+X_2X_3X_9)(X_{10}+X_{11})+\\
(X_2X_4X_{12}+X_5X_6X_8)(X_{10}+X_{11}) + (X_1X_9X_{12}+X_2X_4X_8)(X_{10}+X_{11})+\\
(X_1X_8X_9+X_5X_6X_{12})(X_{10}+X_{11})+ (X_1X_4X_8+X_2X_9X_{12})(X_{10}+X_{11})
\end{array}
\]
Butterfly algorithm performs the computation of M\"obius transform in $13\times 2^{12} = 53248$ operations.
By Proposition~\ref{prop:monomial_sum}, the M\"obius transform of the sum of all monomials of degree one is the sum of all monomials of odd degree. Then $\mu_{[13]}(\sum_{i=0}^{12} X_i)$ is compute in 
$2^{12}$ operations. Concerning monomials of degree $2$, we have $7$ terms $P$ of the form 
$X_i \ (\sum_{j \in J} X_j)$, where $i \notin J$. By Proposition~\ref{prop:mu_factorization}, 
each $\mu_{[13]}(P)$ is compute in $2^{11}$ operations. Then for monomials of degree $3$, we have $18$ terms $P$ of the form $X_{i_1}X_{i_2} \ (\sum_{j \in J} X_j)$, where $i_1,i_2 \notin J$. By again Proposition~\ref{prop:mu_factorization}, each $\mu_{[13]}(P)$ is compute in $2^{10}$ operations.
Finally for monomials of degree $4$, we have $12$ terms $(X_{i_1}X_{i_2}X_{i_3}+X_{i_4}X_{i_5}X_{i_6})(X_{j_1}+X_{j_2})$, where $\{i_1,i_2,i_3,i_4,i_5,i_6\} \cap \{j_1,j_2\} = \emptyset$. By Proposition~\ref{prop:mu_factorizationPlus}, each $\mu_{[13]}(P)$ is compute in $2^{10} - 2^7$ operations. 
Hence the total number of operations is $2^{12}+7*2^{11}+18*2^{10}+12*(2^{10}-2^7) = 47616$. 
We gain $5632$ operations, that is a reduction of $10.57\%$, only rewriting $f_A$ and use previous propositions.

\end{example}

\section{Conclusion}
\label{sec:conclusion}
The major contribution of our work is to introduce a polynomial form without reference of a specific Boolean function;
since the indeterminates indicate the variables which occurs in the ANF and not the number of variables. 
Which allow us to give a new point of view of the M\"obius transform and to manipulate Boolean functions of various number of variables via different M\"obius transform operators. We derive from this operators two new algorithms to compute the M\"obius transform, which can be view as a reformulation of the famous Butterfly one. Furthermore, after a deeper study of this reformulation, we provide a new algorithm which have a huge speed up for really sparse or dense polynomials.
 We also explicitly compute the M\"obius transform and Hamming weight for some classes of Boolean functions. Finally, we exhibit a subfamily of Boolean functions for which ones their Hamming weight is directly related to the algebraic degree of specific factors.

\section*{Acknowledgement}
We would like to thank the reviewers for their precious comments.

\bibliographystyle{plain}
\bibliography{biblio}

\end{document}